\newcolumntype{C}{>{\centering\arraybackslash}p{3.6em}}
\newcolumntype{F}{>{\centering\arraybackslash}p{11em}}
\begin{document}
\onehalfspacing

\newtheorem{thm}{Theorem}
\newtheorem{prop}{Proposition}
\newtheorem{cor}{Corollary}
\newtheorem{lem}{Lemma}
\theoremstyle{definition}
\newtheorem{defn}{Definition}
\newtheorem*{rem}{Remark}

\makeatletter
\def\and{%
  \end{tabular}%
  \hskip 0.3203125em \@plus.17fil\relax
  \begin{tabular}[t]{c}}
\makeatother

\title{\vspace{-1.7453125cm}Cooperation and Cognition in Social Networks\footnote{We thank Peter Bossaerts, Matthew Elliott, and seminar participants at NYU Shanghai and the University of Cambridge for their useful comments. We also thank Wang Song for excellent research assistance. Financial supports from the Cambridge Endowment for Research in Finance (CERF) grant awarded to Edoardo Gallo and the Academic Research Fund (AcRF) Tier 1 Grant RG163/18 from the Ministry of Education of Singapore awarded to Yohanes E. Riyanto are gratefully acknowledged.}}

\newcommand*\samethanks[1][\value{footnote}]{\footnotemark[#1]}

\author{
Edoardo Gallo\thanks{Faculty of Economics, University of Cambridge, Cambridge CB3 9DD, UK; and Magdalene College, CB3 0AG, UK. E-mail: edo@econ.cam.ac.uk.} 
\and Joseph Lee\samethanks
\and Yohanes E. Riyanto\thanks{Division of Economics, School of Social Sciences, Nanyang Technological University, 48 Nanyang Avenue, Singapore 639818. Y.E. Riyanto: e-mail: yeriyanto@ntu.edu.sg; Erwin Wong: e-mail: erwinchi001@e.ntu.edu.sg}
\and Erwin Wong\samethanks
}

\date{May 2023 \\
\vspace{0.05cm}
\vspace{1mm}}
\maketitle

\begin{abstract}

Social networks can sustain cooperation by amplifying the consequences of a single defection through a cascade of relationship losses.  Building on \cite{jackson:2012}, we introduce a novel robustness notion to characterize low cognitive complexity (LCC) networks - a subset of equilibrium networks that imposes a minimal cognitive burden to calculate and comprehend the consequences of defection. We test our theory in a laboratory experiment and find that cooperation is higher in equilibrium than in non-equilibrium networks. Within equilibrium networks, LCC networks exhibit higher levels of cooperation than non-LCC networks. Learning is essential for the emergence of equilibrium play.
\vspace{0.1 in}
\\
\noindent
\textbf{JEL:} C91, C92, D85, Z13.\hfill \\
\noindent
\textbf{Keywords:} cooperation, network, bounded rationality, experiment, learning. \\

\end{abstract}
\clearpage
\section{Introduction} \label{intro}

A hallmark of a well-functioning society is the ability of its members to cooperate with each other in times of need. Social connections are a natural source of help, and there is extensive empirical support for an association between network structure and cooperative activity (e.g. \cite{greif1993contract}, \cite{putnam2000bowling}). Ample experimental evidence highlights the causal link between the ability to alter the network and cooperation. In particular, social networks matter for cooperation because they enable people to punish defectors by removing connections (e.g. \cite{rand2011dynamic}), and therefore cooperation may thrive in social network structures where there is a threat of punishment through potential link removal. Despite these findings, the majority of theoretical contributions take the network as exogenous and unchangeable (\cite{nava2016networks}). We therefore lack a tight link between theoretical predictions and empirical evidence of how network structure, in combination with the ability to alter the network, affects cooperative activity.

\cite{jackson:2012} (JRT hereafter) develop a model to analyze the emergence of cooperation in networks through the threat of punishment by link removal. In their set-up, agents are embedded in a network and at each point in time one individual is called upon to provide a favor to one of their connections. Her choice is to either provide the favor at a cost to herself, or not provide the favor and lose the connection. JRT characterize the set of networks that are renegotiation-proof equilibria (RPE) -- they are stable because the consequences of removing a connection in terms of the chain of punishment are too big for any agent not to provide a favor. A drawback of this characterization is that the set of RPE networks is rather large. In order to address this, JRT define a notion of robustness against social contagion, and identify a subset of RPE networks that they dub \emph{social quilts}. In these networks, a failure to cooperate leads to a chain of punishment that stays local, and therefore the overall network is robust to such a mistake.

In this paper, we define a novel robustness notion in the context of the JRT framework that is based on the cognitive complexity of figuring out the chain of punishment that is a consequence of removing a connection. Understanding this chain of defections requires iterative reasoning, and there is ample evidence that humans are limited in their ability to do it (\cite{crawford2013structural}). Moreover, empirical and experimental evidence show that individuals struggle to cope with the complexity of network structure and its implications for behavior (e.g. \cite{krackhardt1990assessing}, \cite{dessi2016network}). In our theoretical contribution, we introduce the concept of \emph{cognitive complexity number} to classify RPE networks according to the cognitive burden required to calculate RPE predictions. Networks with minimal cognitive complexity number equal to one are what we define as \emph{Low Cognitive Complexity} (LCC) networks. Such networks collapse to an empty network after a chain of defections resulting from a single link removal. The core idea is that these networks are more robust because it is less likely that any agent removes a link due to a failure in iterative reasoning.

We conduct a lab experiment to test how different network structures (RPE, social quilts, LCC networks) sustain cooperation. Participants are randomly assigned to nodes of a network, and in each period of the game one node/participant is picked at random to either provide a favor or delete one of their links. After the first participant has made a decision, the computer selects another node and the process is repeated until no links are left or all participants have consecutively chosen not to delete any link. The experiment has two phases. In Phase 1, participants play the above game on two different networks in succession with the groups reshuffled every time they get assigned to a new network. Between Phase 1 and 2, participants face decision-making tasks against computer players. In Phase 2, participants play the game three times on the two different networks used in Phase 1 with groups again reshuffled every time they are assigned to a new network. Additionally, we elicit social preferences using a Dictator Game.

The first result is that participants are less likely to delete links in RPE networks compared to links in non-RPE networks. This shows that the RPE notion in JRT has behavioral validity in predicting which network structures are more conducive to cooperative activity. Restricting the attention to social quilts, however, provides no additional benefit in terms of cooperation. In other words, participants in a social quilt are as likely to delete links as participants in an RPE network that is not a social quilt. This shows that robustness against social contagion is not a primary determinant of cooperative activity in the networks examined in the experiment.

The second result is that participants are less likely to delete links in LCC networks compared to social quilts and/or other RPE networks that are not LCC. This validates our hypothesis that bounded rationality considerations are a primary driver of what types of network structures are conducive to cooperation.

The third result is that learning matters. Participants need time to understand the consequences of defection and are better able to do so in networks with low cognitive complexity, further showing how cognitive complexity is an important determinant of sustaining cooperation. In Phase 1, social preferences are the primary determinant of link removal decisions, while the type of network structure is not significant. In Phase 2, however, after participants have had the chance to learn from playing against computer players, network structure becomes a significant determinant of link removal decisions and social preferences are insignificant. Moreover, participants are less likely to remove links in LCC compared to any other type of network, including social quilts.

The rest of this paper is organized as follows. Section \ref{lit} discusses the related literature. Section \ref{theory} presents the theoretical results, and the resulting hypotheses. Section \ref{design} explains the set-up of the experiment. Section \ref{results} presents the results of the study, and section \ref{conclusion} concludes. The main Appendix contains the proofs, a more general version of the theoretical model, and the instructions of the experiment.

\section{Related literature} \label{lit}

Our work is related to several strains of theoretical and experimental research in repeated games, networks, and behavioral economics. After a brief overview of related work outside of economics, we review each one of these in turn and discuss how this paper draws novel connections across these areas. 

The question of how social structure determines cooperative behavior is a long-standing interest in many academic disciplines. In sociology, seminal work by \cite{simmel1950sociology} and \cite{coleman1988social} argues that transitivity or closure -- whether someone's friends are friends with each other -- is crucial for cooperation. Their contributions and subsequent work, however, lack firm theoretical foundations (\cite{sobel2002can}). A vast empirical literature shows an association between cooperation and social network characteristics in many disciplines including sociology (e.g. \cite{putnam2000bowling}), management (e.g. \cite{burt2007brokerage}), anthropology (e.g. \cite{apicella2012social}), and political science (e.g. \cite{bond201261}). The endogeneity of the network coupled with the difficulty of separating the impact of network structure from unobservable individual characteristics (\cite{manski2000economic}), however, means that it is challenging for these contributions to identify how network structure causally affects cooperative behavior.

Due to the difficulty of combining network formation with a prisoner's dilemma type of game to capture cooperation, the majority of theoretical contributions in the economics literature take the network as exogenous. For instance, \cite{haag2006social} show that the optimal network to sustain cooperation depends on the distribution of preferences, and completely connected cliques are optimal when there is enough heterogeneity.\footnote{\cite{wolitzky2013cooperation} defines a novel centrality measure to show that more central players cooperate more in the equilibrium with the maximum level of overall cooperation. Other contributions include  \cite{bloch2008informal} and \cite{lippert2011networks}. See \cite{nava2016networks} for a comprehensive review of this literature.} \cite{jackson:2012} is the first paper to incorporate a reduced model of network formation with a cooperation game on a network. Section \ref{theory} describes their framework in detail because the theoretical model in our paper is a reduced form version of theirs. The main result in \cite{jackson:2012} is to characterize the set of RPE networks, and identify a subset of social quilts networks that are robust against social contagion. We build on their theoretical work by introducing a novel notion to select among RPE networks based on bounded rationality considerations, and characterize this equilibrium subset. The main contribution of our paper is to test experimentally the behavioral validity of these equilibria to sustain cooperation.

Early experimental work on cooperation in networks also focused on the case where the network is exogenous. \cite{cassar2007coordination} finds no significant difference in cooperation levels across three types of networks of 8 nodes. Subsequent work confirmed this finding on networks with different structure and sizes.\footnote{In particular, \cite{kirchkamp2007naive} find no substantial differences across circle networks of different sizes, while \cite{gracia2012heterogeneous} find no difference between two large networks. An exception is \cite{rand2014static} who find some effect of network structure if the benefits from cooperation are large enough.} A more recent experimental literature endogenizes the network and shows that the ability to punish defectors by removing connections and/or rejecting proposals to connect fosters cooperation (\cite{rand2011dynamic}, \cite{wang2012cooperation}).\footnote{The efficacy of this network punishment mechanism depends on information about others' reputation (\cite{gallo2015effects}, \cite{cuesta2015reputation}), knowledge of the social network structure (\cite{gallo2015effects}), accuracy of the monitoring technology (\cite{gallo2022cooperation}), and the strength of the connections (\cite{shirado2013quality}, \cite{gallo2019strong}). See \cite{choi2016networks} for a comprehensive review.} All these prior works lack a theoretical framework, while our experiment is explicitly designed to test precise theoretical predictions. 

Experimental (\cite{dessi2016network}) and empirical (\cite{breza2018seeing}) evidence show that humans have a biased and limited perception of social network structure. Insights from behavioral economics seem, therefore, to be quite relevant to understanding how social network structure affects behavior. Despite this clear link, however, there are very few papers at the intersection of behavioral economics and networks.\footnote{Exceptions include \cite{ushchev2020social} who incorporate conformity in network games, and \cite{gallo2020social} who study the effect of confirmation bias on social learning.} 

In particular, we examine how individuals' bounded rationality helps in identifying a subset of equilibria that predict how networks affect behavior. Several empirical and experimental studies show that humans engage in iterative thinking to solve strategic problems (\cite{crawford2013structural}). Structural models based on cognitive hierarchy (\cite{camerer2004cognitive}) and/or level-k reasoning (e.g. \cite{nagel1995unraveling}, \cite{costa2001cognition}) have been successfully applied to explain behavior in, amongst others, auctions (\cite{crawford2007level}), market entry (\cite{goldfarb2011thinks}), and strategic communication (\cite{crawford2003lying}). In this paper we apply the logic of iterative thinking to identify a subset of RPE networks that require only one step of iteration to figure out the full consequences of a defection, and we show that cooperation is higher in these networks compared to other equilibria.

We are the first paper, to our knowledge, to investigate the role of bounded rationality in sustaining cooperation. Given that equilibrium multiplicity is ubiquitous in network problems\footnote{Examples include network formation (\cite{jackson1996strategic}), public goods games (\cite{bramoulle2014strategic}), and social learning (\cite{bala1998learning}).}, this insight may be applicable to other contexts where networks matter.

\section{Theory} \label{theory}

\subsection{Model} \label{model}
The theoretical framework examined in this paper is a reduced form version of the model in \cite{jackson:2012} (JRT hereafter). This section introduces relevant network concepts and the reduced form game. At the end we discuss the main simplifications to the JRT framework.

\medskip
\noindent
\textbf{Networks.} A finite set $N = \{1, . . . , n\}$ of agents is connected by a social network represented by an \emph{unweighted, undirected graph} $g$. $ij$ denotes a link between $i$ and $j$, and the graph $g - ij$ represents the network obtained from $g$ by deleting link $ij$.\footnote{Throughout this paper, we assume there is no self-link, ie $ii=0$ for all $i$.} The \emph{neighborhood} of agent $i$ in $g$ is $N_i(g)=\{j|ij \in g\}$. $d_i(g)\equiv |N_i(g)|$ denotes the \emph{degree} of $i$ in $g$. $D(g)$ denotes the profile of degrees associated with a network g, i.e. $D(g) \equiv (d_1(g),...,d_n(g))$. We say that $D(g)>D(g')$ if $d_i(g)\geq d_i(g')$ for all $i$ with strict inequality for some $i$. The empty network is a network $g^{\empty}$ in which there are no links. A \emph{complete network} is a network in which $ij \in g$ for all $i,j \in{N}$. A \emph{triangle network} is a complete network of three nodes.\footnote{For the purpose of this paper, a complete network of three nodes with other isolated nodes is also called a ``triangle network''.} We say $g'$ is a \emph{subnetwork} of $g$ if $ij \in g'$ only if $ij \in g$. A \emph{walk} in a network $g$ refers to a sequence of nodes $i_1$, $i_2$, ... $i_K$ such that $i_k i_{k+1} \in g$ for each $k$ from $1$ to $K-1$. A \emph{simple cycle} in a network $g$ is a walk in $g$, $i_1$, $i_2$, ... $i_K$, such that $i_1 = i_K$, and the only repeated vertices are $i_1$ and $i_K$.

\noindent
\textbf{Game.} Let $t = \{0, 1, ...\}$ denote discrete time periods. The game begins with some initial network $g_0$ at time $t=0$. Consider time period $t$ and $ij\in g_t$, there is a probability $p$ that $i$ will request a favour from $j$ at $t$. Let $n(n-1)p \leq 1$ and the time between periods be small so that we can assume that at most one favour is needed across all agents in any given period. As JRT discuss in more detail, this is a proxy for a Poisson arrival process.

When an agent $j$ is called to do a favour for agent $i$, agent $j$ can then choose to either perform the favour or refuse the favour. Performing the favour costs agent $j$ an amount of $c>0$, it provides agent $i$ a benefit $v > c$, and involves no change in the network. Refusing the favour entails the permanent loss of the $ij$ link in future periods so $g_{t+1}=g_t-ij$, but there are no costs/benefits for either agent. Assuming agents discount future payoffs according to a $\delta \in (0,1)$ factor, an agent $i$'s expected utility from having a link $ij \in g$ in perpetuity is therefore $b\equiv p(v-c)/(1-\delta)$. This represents the present discounted value of a perpetual mutual favour-exchange relationship. We assume throughout this paper that $2b > c > b$. Appendix A discusses the generalization of the analysis to all $c$ and $b$. Note that this is a complete information game, i.e. agents are fully aware of all moves in the game at every node.

\noindent
\textbf{Reduced Form.} We have adopted a reduced form of the JRT model along the following two dimensions in order to simplify exposition and match our experimental set-up. 

First, in the original JRT framework agents simultaneously announce the links that they are willing to retain in each time period $t$ prior to favour decisions. There is, however, no strategic benefit for agent $i$ to remove her link with another agent $j$ because retaining the link is costless and allows agent $i$ to potentially call on agent $j$ for a favour in the future. We have, therefore, suppressed this stage both in the theory and the experiment to focus on the favour exchange stage.

Second, we restrict our attention to the $2b > c > b$ range. The lower bound is identical to JRT and allows the focus to be on the interesting case where simple bilateral relations cannot sustain cooperation. The upper bound simplifies the characterization of equilibrium networks, and it is consistent with the calibration in the experiment.

\subsection{Results} \label{theoryresults}
The first part of this section summarizes the characterization of equilibrium predictions in JRT in the context of the reduced form model. Following their approach, it characterizes the subset of \emph{social quilts} equilibria that are robust against social contagion. In the second part we introduce the novel notion of \emph{cognitive complexity} to provide an alternative method to select equilibria robust against social contagion based on behavioural considerations. 

JRT analyze the game using the renegotiation-proof equilibrium (RPE hereafter) concept, and we refer the reader to their paper for a detailed discussion on their definition of RPE in this game. Here we briefly restate their definitions and focus on results that are relevant to formulate our hypotheses in section \ref{hyp} within our reduced form model. In particular, the assumption that $2b > c > b$ is crucial to simplify the ensuing notation, statements, and exposition. 
First, the following is the adaptation of the definition of Transitively-Critical (TC) networks to our reduced model.

\begin{defn}
\label{defn:tc}
The class of \textbf{TC networks} are generated as follows: Define $k$ as the number of links in the network. For $k=0$, define $TC_0=\{g^{\emptyset}\}$. Inductively on $k$, $TC_k$ is such that $g \in TC_k$ if and only if $\forall i,ij \subset g, \exists g' \subset g-ij $ such that 
\begin{itemize}
 \item $g' \in TC_{k'}$ for some $k'\leq k-2$; 
 \item $d_i (g') \leq d_i (g)-2$
 \item $ \nexists g'' \in TC_{k''} $ s.t. $ g'' \subset g-ij $ and $ D(g'')>D(g') $.
\end{itemize}
\end{defn}

\noindent
Intuitively, the TC concept states that in all continuing sub-networks that are also TC, individuals who lost links must have lost at least 2 links. A key result in JRT is to show that there is a one-to-one correspondence between RPE and the set of TC networks.

\begin{prop}
\label{prop:RPE}
\textbf{(JRT)} A network is a RPE if and only if it is Transitively Critical.
\end{prop}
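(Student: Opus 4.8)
The plan is to prove the equivalence by induction on the number of links $k$, exploiting the fact that both the renegotiation-proof continuation play and Definition \ref{defn:tc} are recursively defined on smaller subnetworks. Because the result is attributed to JRT, the argument adapts their characterization to the reduced-form payoffs $2b > c > b$. The base case $k=0$ is immediate: the empty network $g^{\emptyset}$ is absorbing, with no favor ever requested and no link to sever, so it is trivially an RPE, and Definition \ref{defn:tc} sets $TC_0=\{g^{\emptyset}\}$. The inductive hypothesis is that RPE and TC coincide for every network with fewer than $k$ links.

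The economic engine of the inductive step is how $2b > c > b$ pins down the incentive constraint. When agent $i$ is called to provide a favor to $j$ across link $ij$, refusing saves the immediate cost $c$ but forfeits the value $b$ of link $ij$ together with any additional links of $i$ that are severed in the cascade triggered on $g-ij$. Since renegotiation selects the best continuation equilibrium $g'$, agent $i$'s one-shot deviation is unprofitable exactly when $c \leq \bigl(d_i(g)-d_i(g')\bigr)b$. Losing only the single link $ij$ gives a loss of $b<c$ and would not deter refusal, whereas losing at least two links gives a loss of at least $2b>c$ and does; hence the deterrence threshold is precisely two links, which is condition two of Definition \ref{defn:tc}, namely $d_i(g') \leq d_i(g)-2$. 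This is also why a bare bilateral relationship cannot sustain cooperation.

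For the direction RPE $\Rightarrow$ TC, I would fix an RPE supporting $g$ and, for each link $ij$, let $g'$ be the continuation network reached after $ij$ is deleted. Renegotiation-proofness forces $g'$ to be an RPE on a subnetwork of $g-ij$, so by the inductive hypothesis $g'$ is a TC network with $k' \leq k-2$ links, delivering condition one; the incentive argument above delivers condition two. Renegotiation further requires the continuation to be Pareto-undominated among equilibrium continuations, and since every link contributes positive value $b$, a Pareto improvement is exactly a degree increase $D(g'')>D(g')$, so $g'$ must be degree-maximal among TC subnetworks of $g-ij$, which is condition three. For the converse TC $\Rightarrow$ RPE, I would construct the candidate profile explicitly: on path every agent complies, and after any deletion of $ij$ play transitions to the degree-maximal TC subnetwork supplied by conditions one and three. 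The inductive hypothesis certifies that this continuation is itself an RPE, condition two certifies that no one-shot refusal is profitable (the forfeited value is at least $2b>c$), and condition three certifies renegotiation-proofness of the selected continuation.

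The main obstacle is the careful treatment of the cascade and the renegotiation selection. A single refusal need not sever a single link: it initiates a continuation game whose equilibrium path may delete many links before reaching a stable absorbing network, and one must verify that this endpoint coincides with the \emph{degree-maximal} TC subnetwork of $g-ij$ rather than merely some TC subnetwork. The delicate point is that the quantifier structure of Definition \ref{defn:tc} lets the continuation $g'$ depend on which link $ij$ is broken, so the equilibrium construction in the converse direction must specify continuation play consistently link by link while preserving global renegotiation-proofness; reconciling these link-specific continuations with the Pareto-undominance requirement of condition three, especially when several degree-incomparable maximal TC subnetworks coexist, is where the argument demands the most care.
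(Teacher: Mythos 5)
The paper does not prove this proposition: it is imported verbatim from \cite{jackson:2012}, with only an intuition paragraph and a footnote deferring to the appendix of JRT for the detailed argument. Your sketch correctly reproduces that intuition --- the $2b>c>b$ calibration turning the one-shot deviation constraint into ``at least two links must be lost,'' the induction on link count mirroring the recursive structure of Definition \ref{defn:tc}, and the identification of the Pareto-undominance requirement of renegotiation-proofness with degree-maximality --- and is consistent with the JRT proof strategy, though as you yourself note it remains a plan: the cascade endpoint and the link-by-link consistency of the renegotiation selection are flagged rather than resolved, so this is an accurate reconstruction of the approach rather than a complete proof.
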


The intuition is that in deciding whether to accept or refuse the favour, agents consider the eventual resulting network from refusing the favour. If the loss of favour-exchange relationships is more than the cost of the favour (i.e. at least 2 links to the individual are deleted as a consequence of not performing the favour given $2b>c>b$), then the agent will do the favour. If it is optimal for all agents to do the favours for each of their links, then the network is a RPE network.\footnote{See Appendix in JRT for a detailed proof.}

A drawback of Proposition \ref{prop:RPE} is its limited predictive power because the class of TC networks is quite large. Additionally, some TC networks are not robust to the random deletion of a link due to an accident and/or failure by an agent to calculate RPE predictions. JRT introduce a stronger equilibrium concept of ``robustness against social contagion'' within the class of TC networks, where the deletion of a link results in ``local'' rather than ``global'' damage. These networks prevent the entire society from unravelling if an agent deletes links contrary to RPE predictions.

\begin{defn}
\label{defn:robust}
A network $g$ is \textbf{robust against social contagion} if it is RPE and sustained as part of a pure strategy subgame perfect equilibrium with $g_0 = g$ that satisfies the following condition. In any subgame continuation from any RPE $g' \subset g$, and for any $i$ and $ij \in g'$, if $i$ deletes the link with $ij$, then the continuation leads to $g''$ such that if $hl \not\in g''$ then $h \in N_i(g') \cup \{ i \}$ and $l \in N_i(g') \cup \{ i \}$.
\end{defn}

For networks that are robust against social contagion, if an agent deletes links contrary to RPE predictions then the only link deletions in steady-state involve that agent and her neighbours. One of the key results in JRT is to show that there is a one-to-one correspondence between networks that are robust against social contagion and a class of networks that they dub ``social quilts''.

\begin{defn}
\label{defn:quilt} 
A \textbf{social quilt} is a union of triangle networks such that there is no simple cycle in the network involving more than 3 nodes.
\end{defn}

\begin{prop}
\label{prop:quilt}
\textbf{(JRT)} A network is robust against social contagion if and only if it is a social quilt. 
\end{prop}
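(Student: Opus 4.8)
The plan is to prove both implications using the equivalence RPE $\Leftrightarrow$ TC from Proposition \ref{prop:RPE}, so that I may replace ``RPE'' throughout by the inductive degree-drop structure of Definition \ref{defn:tc}. The common engine in both directions is the \emph{unravelling cascade}: when an agent $i$ refuses a favour on $ij$, continuation play moves to a maximal TC subnetwork of $g-ij$, and since $2b>c>b$ every surviving link must still cost each of its endpoints at least two links, so any node reduced to a single link drops it. I would first record this as a lemma that describes, for each RPE network $g'$ and each deletion $ij$, the maximal TC continuation $g''$, and check that it coincides with the continuation selected by the trigger-strategy subgame perfect equilibrium sustaining $g'$.

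For ``social quilt $\Rightarrow$ robust'', I would first observe that in a social quilt no two triangles share an edge, since a shared edge would produce a simple $4$-cycle; hence every link lies in a unique triangle. Deleting a link $ij$ turns the other two links of its triangle into pendants, which the cascade strips away, collapsing that triangle, while the ``no simple cycle on more than three nodes'' condition ensures that any vertex shared with a second triangle keeps degree two inside that triangle, so no other triangle is disturbed. The deleted links therefore all sit inside the single triangle through $ij$, whose vertices lie in $N_i(g')\cup\{i\}$, giving exactly the locality of Definition \ref{defn:robust}. What remains is the bookkeeping that a social quilt is itself TC and that the trigger strategy is subgame perfect, both by induction on the number of triangles.

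For the converse, ``robust $\Rightarrow$ social quilt'', I would prove the contrapositive: an RPE network that is not a social quilt admits a deletion whose cascade escapes $N_i(g')\cup\{i\}$. By Proposition \ref{prop:RPE} the network is TC. If a link $ij$ lies in no triangle then $i$ and $j$ share no neighbour, yet sustaining $ij$ forces $i$ to shed a second link, so the support of $ij$ contains a simple cycle of length at least four; it therefore suffices to treat a network carrying a simple cycle on more than three nodes. Deleting one edge of such a cycle, I would follow the cascade around it: when the cycle is chordless, each successive vertex is pared down to a lone cycle-link and drops it, so some deleted link is incident to a vertex at graph distance at least two from the deviator, violating locality. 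Combining the two steps, a robust network has every link in a triangle and no long simple cycle, i.e.\ it is a social quilt.

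The step I expect to be hardest is the cascade bookkeeping in the converse when the offending cycle carries chords -- a ``cycle of triangles'' rather than a chordless cycle. There the collapse can reroute through the chord-triangles and partially re-stabilise, so the one-step-at-a-time propagation is not conclusive and one must show that the maximal TC continuation still omits some link incident to a non-neighbour of the deviator. I expect the cleanest route is to pass to the decomposition of the network into triangles glued at cut-vertices and to prove that robustness forces these triangles to form a forest, ruling out any cycle of triangles by applying the deletion to a triangle that is a leaf of a putative such cycle. The delicate quantifier-level point underlying all of this is the uniqueness (or at least the degree profile) of the maximal TC continuation, which must be pinned down so that the sustaining equilibrium cannot select a continuation that dodges the non-local deletion.
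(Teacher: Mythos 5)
First, a point of reference: the paper does not prove Proposition~\ref{prop:quilt} at all. It is imported verbatim from \cite{jackson:2012}, and the only text accompanying it is the one-paragraph intuition that a deleted link takes down exactly its own triangle while the other triangles, being self-sustaining RPE units, survive. So there is no in-paper argument to compare yours against; the relevant benchmark is JRT's own proof, which your sketch is clearly modelled on. Your forward direction is essentially their argument and essentially sound: the observation that two triangles sharing an edge create a simple $4$-cycle, hence every link of a social quilt lies in a unique triangle, is correct, and the collapse-one-triangle-only cascade then gives locality, modulo the induction over RPE subnetworks $g'\subset g$ that Definition~\ref{defn:robust} quantifies over (you gesture at this with the induction on triangles, and it does go through because RPE subnetworks of a quilt are sub-unions of its triangles).

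The genuine gap is in the converse, and you have located it yourself without closing it. Two steps are asserted rather than proved. (i) The reduction of ``some link lies in no triangle'' to ``there is a simple cycle on more than three nodes'': your justification (``sustaining $ij$ forces $i$ to shed a second link, so the support of $ij$ contains a simple cycle of length at least four'') does not follow from the TC condition as stated --- the second link $i$ must lose need not lie on a cycle through $ij$, and you would need an argument (e.g.\ via bridges and the maximality clause of Definition~\ref{defn:tc}) to rule out a TC network whose only simple cycles are triangles but which carries a triangle-free link. (ii) The chorded-cycle case: when the long simple cycle carries chords, the cascade can indeed re-stabilise on a nonempty TC subnetwork, and showing that the \emph{maximal} TC continuation still deletes a link outside $N_i(g')\cup\{i\}$ is exactly where JRT do real work. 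Your proposed repair --- decompose into triangles glued at cut-vertices and show robustness forces this gluing to be a forest --- is the right shape of argument, but as written it is a plan, not a proof: the ``delicate quantifier-level point'' about pinning down the degree profile of the maximal TC continuation, which you correctly identify as load-bearing, is left open. Until that is established, the sustaining equilibrium could in principle select a continuation that keeps all deletions local, and the contrapositive does not close.
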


Intuitively, should an agent delete a link, only the triangle network that the link is part of will be lost. The other triangle networks remain untouched as each triangle network is RPE and is self-sustaining. Therefore, only the agent deleting the link and her neighbours will be affected, and the damage remains localised.

Social quilts provide a mechanic way to stem contagion by limiting it to a small part of the network. This is an appealing solution if a random link deletion is an event purely caused by chance, and any link is equally susceptible to this. In reality, however, mistakes that lead to link deletions may be the result of an agent's bounded rationality that makes her unable to calculate RPE predictions. If bounded rationality is the cause of deletion mistakes then deletions will depend on an agent's level of (bounded) rationality as well as how difficult it is to compute RPE predictions in a given network environment.

Here we propose an alternative notion to select among RPE equilibria based on considerations of bounded rationality. In particular, we define the concept of ``cognitive complexity number'' to classify RPE networks according to the cognitive burden required to calculate RPE predictions, and focus on the class of networks that have cognitive complexity number equal to 1.

\begin{defn}
\label{defn:lowCC}
Denote by $cc(g)$ the \textbf{cognitive complexity number} of network $g$, and assume $cc(g^{\emptyset})=0$. Suppose we have a TC network $g$ and the largest TC subnetwork of $g$ is $g' \subset g$, then $cc(g)=cc(g^{'})+1$. We define $g$ to be a \textbf{Low Cognitive Complexity (LCC)} network if $cc(g)=1$.
\end{defn}

Given that computing RPE predictions requires the calculation of TC networks, the notion of cognitive complexity number is based upon the definition of TC networks. In particular, an agent needs to ascertain the largest TC subnetwork should she refuse a favour. If the subnetwork is not an empty network, the agent must iteratively identify further subnetworks until she reaches the empty network. The more iterations she has to go through, the more cognitively complex the network is. We define LCC networks to be the ones that collapse to the empty network after one link removal because from a cognitive point of view these are the simplest environments to understand the consequences of a favour refusal. The following proposition characterizes the class of LCC networks.

\begin{prop}
\label{prop:lowCC}
A network is LCC if and only if it is a simple cycle.
\end{prop}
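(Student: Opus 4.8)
The plan is to first translate the LCC condition of Definition~\ref{defn:lowCC} into a purely structural statement and then prove the resulting equivalence in two directions. Unwinding the definition of $cc$: since $cc(g^{\emptyset})=0$ and $cc(g)=cc(g')+1$ with $g'$ the largest TC subnetwork of $g$, the equality $cc(g)=1$ holds exactly when that largest TC subnetwork is empty, i.e. when $g$ is a TC network that contains \emph{no} non-empty proper TC subnetwork. Hence Proposition~\ref{prop:lowCC} is equivalent to the claim that the minimal non-empty TC networks (minimal in the subnetwork order) are precisely the simple cycles, and I would prove each implication of this reformulation. The workhorse is a single observation drawn straight from Definition~\ref{defn:tc}: every TC network has minimum degree at least $2$ on its non-isolated vertices, because the required $g'$ must satisfy $d_i(g')\le d_i(g)-2$ together with $d_i(g')\ge 0$, forcing $d_i(g)\ge 2$ whenever $i$ carries a link. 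A corollary I would record at once is that any non-empty subnetwork of a path is a disjoint union of paths and so has a degree-$1$ vertex; therefore the only TC subnetwork of a path is $g^{\emptyset}$.

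For the ``if'' direction (simple cycle $\Rightarrow$ LCC), let $C_K$ be a simple cycle on $K\ge 3$ nodes. I would verify $C_K\in TC_K$ directly: for any link $ij$, the network $C_K-ij$ is a single path, whose only TC subnetwork is $g^{\emptyset}\in TC_0$ by the corollary. Choosing $g'=g^{\emptyset}$ then satisfies all three bullets of Definition~\ref{defn:tc}, since $0\le K-2$, $d_i(g')=0\le d_i(C_K)-2=0$, and the maximality bullet holds because no TC subnetwork of $C_K-ij$ has degree profile exceeding the zero profile. The same corollary shows that every proper subnetwork of $C_K$ is a disjoint union of paths and hence carries no non-empty TC subnetwork, so the largest proper TC subnetwork of $C_K$ is $g^{\emptyset}$ and $cc(C_K)=1$.

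For the ``only if'' direction (LCC $\Rightarrow$ simple cycle), let $g$ be LCC, that is a minimal non-empty TC network. Because $g$ is TC it has minimum degree at least $2$, and a standard argument (follow a non-backtracking walk until a vertex repeats) yields a simple cycle $C$ whose edges all lie in $g$, so $C$ is a non-empty subnetwork of $g$. By the ``if'' direction $C$ is itself TC, and minimality of $g$ forces $C$ not to be a proper subnetwork, i.e. $C=g$ as edge sets; any vertex of $g$ outside $C$ must be isolated, since all edges of $g$ already belong to $C$. Hence $g$ is a simple cycle, completing the equivalence.

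The step I expect to demand the most care is the verification that a simple cycle meets Definition~\ref{defn:tc}, especially the maximality bullet, which rests entirely on the corollary that paths admit no non-empty TC subnetwork; this, along with cleanly dispatching isolated vertices and matching the inductive base case $TC_0$, is where the real bookkeeping sits. By contrast, the converse is short once the ``if'' direction is in hand, as it only combines minimality with the routine graph-theoretic fact that a minimum-degree-$2$ network contains a cycle.
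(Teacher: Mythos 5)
Your proposal is correct and follows essentially the same route as the paper's Appendix A proof: a minimum-degree-two observation (the paper's Lemma \ref{lem:lowCCproof-1}) forces any non-empty equilibrium network to contain a simple cycle (Lemmas \ref{lem:lowCCproof-2}--\ref{lem:lowCCproof-3}), simple cycles are themselves equilibria (Lemma \ref{lem:lowCCproof-4}), and minimality then pins down the cycle as the whole network. The only difference is presentational: you verify the three bullets of Definition \ref{defn:tc} combinatorially (via the fact that paths admit no non-empty TC subnetwork), whereas the paper argues through RPE payoffs and the deletion cascade, the two being interchangeable by Proposition \ref{prop:RPE}; if anything, your direct check of the ``if'' direction is slightly more explicit than the paper's cascade argument.
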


The intuition is as follows. If an agent only has one link connected to her, she will refuse any favour when called upon, since the cost of doing a favour is more than any potential benefit from retaining the link. This means that if any link is broken in a simple cycle, it will trigger a series of deletions and the resultant network will be the empty network. Thus, a simple cycle is LCC. Since agents with links have two or more links, all non-empty RPE networks minimally contain a simple cycle. As such, only simple cycles are LCC, since all other networks will have at least a simple cycle as the resultant network instead of the empty network. An implication of this result is that the simple cycle can be thought of as a ``basic unit'' to help us generate RPE networks with higher cognitive complexity number. Appendix A contains a detailed proof.

Within the $2b>c>b$ parameter range, there is one RPE network that satisfies both equilibrium concepts of LCC and robustness against social contagion.

\begin{cor}
\label{cor:triangle}
The triangle network is the only network which is LCC and SQ.
\end{cor}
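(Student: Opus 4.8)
The plan is to reduce the statement to the two structural characterizations already established. By Proposition~\ref{prop:lowCC}, the LCC networks are precisely the simple cycles, and by Proposition~\ref{prop:quilt} together with Definition~\ref{defn:quilt}, the SQ networks are precisely the social quilts, i.e.\ unions of triangle networks that contain no simple cycle on more than three nodes. The corollary then reduces to the question of which simple cycles are social quilts.

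For the containment of the triangle in both classes I would argue directly. The triangle network is a simple cycle on three nodes, hence LCC by Proposition~\ref{prop:lowCC}. It is also a single triangle network, so it is trivially a union of triangle networks, and its unique simple cycle runs over exactly three nodes; by Definition~\ref{defn:quilt} it is a social quilt and therefore SQ by Proposition~\ref{prop:quilt}.

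For the converse, suppose $g$ is both LCC and SQ. By Proposition~\ref{prop:lowCC}, $g$ is a simple cycle $i_1, i_2, \dots, i_K = i_1$ on $K-1$ distinct nodes, and since $g$ is a simple graph with no self-links a simple cycle runs over at least three distinct nodes, so $K-1 \geq 3$. This cycle is itself a simple cycle contained in $g$, so if $K-1 > 3$ then $g$ would contain a simple cycle on more than three nodes, contradicting the social-quilt characterization of SQ networks guaranteed by Proposition~\ref{prop:quilt} and Definition~\ref{defn:quilt}. Hence $K-1 = 3$ and $g$ is a triangle network.

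Given the two characterization propositions, the argument is essentially immediate and I do not anticipate any substantial obstacle; the only point requiring mild care is the bookkeeping on the minimal length of a simple cycle, which ensures that the common class consists of a genuine triangle rather than being empty or admitting a longer cycle.
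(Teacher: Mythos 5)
Your argument is correct and follows essentially the same route as the paper's own proof: both directions reduce to Proposition~\ref{prop:lowCC} (LCC networks are exactly the simple cycles) and Definition~\ref{defn:quilt} (a social quilt admits no simple cycle on more than three nodes), so the only simple cycle that is also a social quilt is the triangle. The only difference is that you spell out the minimal-cycle-length bookkeeping and route through Proposition~\ref{prop:quilt}, which is harmless but unnecessary since ``SQ'' already denotes the structural class of Definition~\ref{defn:quilt}.
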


This follows from Definition \ref{defn:quilt} and Proposition \ref{prop:lowCC}, since triangle networks are both social quilts and simple cycles. Other simple cycles involving more than 3 nodes would no longer be social quilts by Definition \ref{defn:quilt}.

\subsection{Hypotheses} \label{hyp}

This section lists and discusses the hypotheses we are investigating in our experiment. The basic check of the predictive power of Proposition \ref{prop:RPE} is that cooperation is higher in RPE networks compared to networks that are not RPE. The top part of Figure \ref{allnetworks} shows the six RPE networks we investigate in our experiment, while the bottom part of Figure \ref{allnetworks} displays the five non-RPE networks that we compare them to. In the set-up of our experiment, cooperation means not deleting links, so the following is our first hypothesis.

\begin{figure}[h]
    \centering
    \caption{Networks in our experimental design}
    \includegraphics[width=\linewidth]{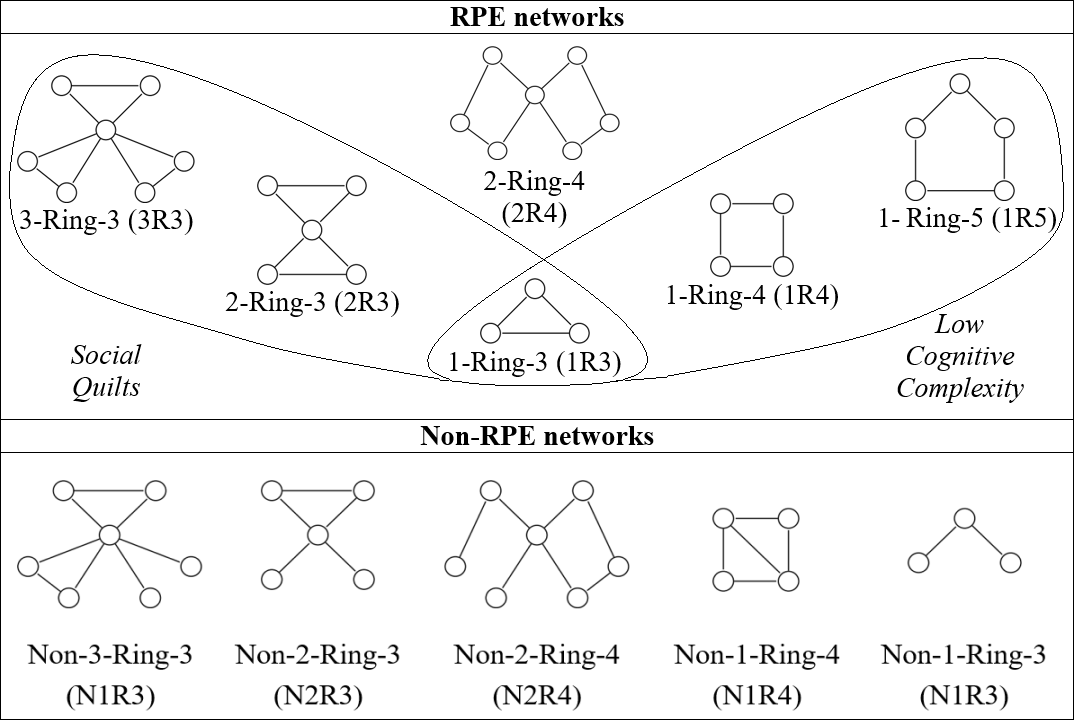}
    \label{allnetworks}
\end{figure}

\vspace{2mm}
\noindent
 \emph{H1: Participants are less likely to delete links in an RPE compared to a non-RPE network.}

As we discuss in section \ref{model}, the class of RPE networks is quite large so the predictive power of this equilibrium concept is limited. For instance, it does not discriminate among the six RPE networks in our experiment despite their evident differences in structure. JRT propose social quilts as a refinement to identify a set of networks where cooperation is more likely to be maintained because they are robust to contagion. Given that contagion stays ``local'' in SQ networks, participants ought to be less likely to delete links because there is a lower probability that they will be involved in a contagion chain compared to participants in other RPE networks that are not SQ. In our design, we investigate three RPE networks that are social quilts. They are composed by one, two or three rings of three nodes with a central node that belongs to all the rings. These are labelled as $x$-Ring-3 with $x\in \{1,2,3\}$ as shown in the top-left part of Figure \ref{allnetworks}. This leads to our second hypothesis.


 \vspace{2mm}
\noindent
 \emph{H2: Participants are less likely to delete links in an SQ network than in an RPE network that is not SQ. In particular, participants are less likely to delete links in x-Ring-3 networks compared to x-Ring-y networks, where $x\in \{1,2,3\}$ and $y>3$.}
 
In section \ref{model} we propose the novel concept of LCC networks to identify a subset of RPE networks where cooperation is more likely to be sustained. The core idea is that participants are more likely to cooperate in networks whose structure makes it cognitively easy to see the chain of contagion that is a consequence of a failure to cooperate. This is a different approach to carve out a set of robust RPE networks compared to the concept of social quilts. In the context of our design, we investigate three RPE networks that are LCC formed by one ring of three, four and five nodes. For this reason, we label them 1-Ring-y with $y\in {3,4,5}$ as shown in the top-right part of Figure \ref{allnetworks}. Note that, as shown in Figure \ref{allnetworks}, 1-Ring-3 is the only RPE network in our experiment that is both a SQ network and a LCC network at the same time. This leads to our third hypothesis.


 \vspace{2mm}
\noindent
 \emph{H3: Participants are less likely to delete links in an LCC network than in an RPE network that is not LCC. In particular, participants are less likely to delete links in 1-Ring-y networks compared to x-Ring-y networks, where $x>1$ and $y \in \{3,4,5\}$.}
 
\section{Experimental design and procedures} \label{design}

The experiments were conducted at Nanyang Technological University (NTU), and programmed using oTree (\cite{chen:2016}). A total of 225 NTU students from all majors participated in our experiments. At the beginning of each session, we ensured anonymity by assigning each participant a random numerical ID as their identifier throughout the experiment. Participants had a physical copy of the instructions, and the instructions were also displayed on the computer screen as they were read aloud by the experimenter. In the rest of this section, we describe the game and explain the overall workflow of the experiment.

In each network game, subjects in the session are split into groups according to the size of the networks they play in.\footnote{If the number of participants is not divisible by the network size ``excess'' players are chosen at random to sit out.} Each subject in the group is then randomly assigned to a node in the network. In each round, the computer randomly selects one of the nodes, and the subject assigned to that node has to decide to delete one of their links or to keep all of their links.\footnote{If the participant does not make a choice then the computer automatically selects no link deletion after 60 seconds.} After the first participant has made a decision, the computer selects another node and the process is repeated for the next round. The sequence of node selection is randomized, with the constraint that all other nodes have to be selected before a node is selected a second time. Participants can only see the network at the time they need to make a decision, and they do not know the sequence of node selection nor the history of other participants' decisions.

As it is not practical for the experiment to model time-discounting flows, the design cannot exactly replicate favour exchanges as per the theoretical set up in section \ref{model}. Instead of having a favour provide a benefit $v$ to the recipient at the cost of $c$ to the giver, we proxy this by participants receiving a one-time benefit of $b$ for each of the links connected to them that survive at the end of the game. To maintain the dynamics and the RPE predictions in section \ref{model}, we allow participants to gain $c$ for each link they delete, which is the relative gain they would get from refusing to perform a favour (since they would save on the cost $c$).

However, this new structure introduces a new strategic element not present in the model in section \ref{model}. As an example, consider a triangle network with participants $\{i,j,k\}$, and assume $i$ refuses a favour to $k$. In the JRT framework, the $ik$ link breaks, the network is no longer RPE, and therefore it immediately collapses to the empty network so $i$ immediately loses two links as a result of her refusal. Consider, instead, our experimental set-up. When $i$ removes the link and gains $c$, there are still opportunities for participants to gain a total of $2c$ collectively from deleting the remaining two links. This presents a possible corner case. Suppose the order of turns is $i\rightarrow j \rightarrow k \rightarrow i$. Suppose that $i$ decides to cut the link with $k$. After $i$'s decision, it is $j$'s turn to decide which link to be deleted, and $j$ may choose between deleting against $i$ or $k$. If $j$ deletes the $jk$ link, then $k$ has no link to delete in her turn, and $i$ can delete the $ij$ link in the next turn. While the probability $h$ of this scenario occurring is low, we need to set $2b > c + hc > b$ to account for this corner case. By setting $c$ to be close to $b$, we ensure that participants who are risk-averse, risk-neutral or even slightly risk-loving will still find it optimal to play according to equilibrium predictions.

In order to mitigate the impact of this strategic effect, we chose $b=100$ points for each of the participants' links that survives deletion, and $c=110$ points for each link they delete. These parameters are within the $2b > c > b$ range in our main model and they account for the above corner case, given that $c$ is close to $b$.

Another difference between this game and the theoretical set-up in section \ref{model} is that nodes, rather than links, are selected for decisions. The reason is that in the (relatively) small networks in the experiment selecting a link would imply significant differences across participants in the likelihood of being selected to make a decision (e.g. the center node in 3-Ring-3 would be three times as likely to be picked than every other node). It would, therefore, be challenging to set-up a randomized link selection process that would not give a participant more opportunities to play the game in a way that is excluded in the theory.

The workflow of the experiment is described in Table \ref{table:flow}. In Phase 1, participants play the above game on two different networks. After the first network, the groups are reshuffled to minimise spillovers across games and subsequently they play the game on the second network. Following Phase 1, participants face an individual decision-making task against computer players described in detail below. In Phase 2, participants play the game three times on two different networks used in Phase 1 with the first and last plays on the second network seen in Phase 1, and the second play on the first network seen in Phase 1. To account for order effects, we run another experiment session with the opposing order of networks seen in Phase 1. Similarly to Phase 1, the groups are reshuffled after each network.

\begin{table}
\centering
\begin{threeparttable}
\caption{Tasks, participants and rounds by treatment group.}
\centering
\arrayrulecolor{black}
\label{table:flow}
\begin{tabularx}{\textwidth}{@{} F *{7}{X} @{}}
\toprule
    & & & \multicolumn{2}{c}{Treatment} & & & Totals\\
\cmidrule{2-7}
    & 1 & 2 & 3 & 4 & 5 & 6 \\
\toprule
\toprule
    & \multicolumn{6}{c}{Networks} &\\
    & & & & & & \\
    Phase 1 & 1R3, N1R3 & 1R4, N1R4 & 1R5, 2R3 & 2R4, 3R3 & 2R3, 1R3 & 2R3 \\
    & & & & & & \\
    Computer Task & \multicolumn{6}{c}{All participants play each network in Figure \ref{allnetworks}} &\\
    & & & & & & \\
    Phase 2 & N1R3, 1R3, N1R3 & N1R4, 1R4, N1R4 & 2R3, 1R5, 2R3 & 3R3, 2R4, 3R3 & 1R3, 2R3, 1R3 & 2R3 \\
    & & & & & & \\
\bottomrule
\toprule
    Participants & 48 & 40 & 50 & 42 & 30 & 15 & 225\\
\bottomrule
\toprule
\end{tabularx}
\begin{tablenotes}
\footnotesize
\item Note: There are 2 sessions per treatment. Given the same treatment, the only difference between the sessions is that the orders of networks seen in Phase 1 and Phase 2 are reversed. For each treatment, the specific networks, order of networks played in each round, and number of participants are specified.
\end{tablenotes}
\arrayrulecolor{black}
\end{threeparttable}
\end{table}

Between Phase 1 and 2 participants play the same game against computer players, which we call the Computer Task. They are told that each of the other nodes is a computer player, and that all computer players make optimal decisions. Specifically, we programmed the computer player not to delete any link if the network is RPE, and to delete one of its links if the network is not RPE and this link should be deleted in equilibrium. This game is repeated for the eleven different networks in Figure \ref{allnetworks} so in each session we can compare how a participant plays against other participants as opposed to computer players for each network.\footnote{In the Computer Task, the participant is always assigned to (one of) the node(s) with the highest degree. When there are multiple nodes with the highest degree, then the assignment is randomized.}

There are two reasons to make participants play against computer players. First, it generates useful data to disentangle social preferences from limited cognitive ability as a driver of decisions. When a participant fails to remove a link in a non-RPE network against other participants, it can be due to the inability to understand the equilibrium or because of altruism -- the latter channel is clearly suppressed when playing against a computer. Second, it allows us to disentangle whether a participant removes a link in a non-RPE network because of limited cognitive ability or because it believes other participants have limited cognitive ability (and therefore it is rational to delete the link because it would not survive anyway). The latter driver of decisions is clearly absent with computer players who always make rational (optimal) decisions, and therefore this task allows us to partially identify the cognitive sophistication of participants by exposing them to networks with different cognitive complexity numbers.

At the very beginning of the experiment, we elicit social preferences using a Dictator Game. Each participant has to decide how much of an endowment of 1 SGD to allocate to another randomly selected participant. We ensure that a participant is never a recipient of her recipient's allocation to exclude reciprocity as a factor influencing their allocation decisions. The outcome of the game is only communicated to participants at the end of the experiment to ensure earnings do not affect decisions in the experiment. 

Toward the end of the experiment, we elicit participants’ risk preferences using the Holt-Laury multiple-price list risk elicitation (\cite{holt:2002}). We also conduct a post-experimental questionnaire to collect data on individual characteristics such as game theory experience, age, and gender. On top of their earnings from the Dictator Game and the Holt-Laury risk elicitation, one round from Phase 1 and two rounds from Phase 2 are randomly selected to determine the total payment of each participant. The experimental sessions lasted an average of 75 minutes, and the average earning for each participant was 21 SGD. 

\section{Empirical Results} \label{results}

We begin by focusing our regression analysis at the decision-level. Every time participants are assigned to a network, they have to make a decision on whether to keep all links or delete one of them. For example, consider the case where three participants have been assigned to play in the 1R3 network. Clearly, there will be a minimum of 3 decisions made in this network game, and therefore this network game will contribute to a minimum of three units of observation to the decision-level data. We run logit regressions, where each unit of observation is a decision made by a participant, with standard errors clustered at the subject level in the following form.
\[
Y_{i, s} =  \beta_{0} + \beta_{T}\textbf{T}_{i} + \beta_{C}\textbf{C}_{s} + \epsilon
\]
The dependent variable $Y_{i, s}$ is binary -- it equals 1 if decision $i$ made by subject $s$ is to delete a link, and equals 0 otherwise. $\textbf{T}_{i}$ denotes a vector of indicator variables capturing whether the network at which decision $i$ is made is an RPE network, a SQ network, a LCC network, or a 1R3 network. $\textbf{C}_{s}$ denotes a vector of control variables unique to subject $s$ making decision $i$. The full list of control variables used consists of the subject's gender, risk preferences elicited using the Holt-Laury multiple price list risk-elicitation procedure, and an indicator variable for (lack of) altruism which is equal to 1 if the individual allocated all points to him(her)self in the dictator game and 0 otherwise. $\beta_{0}$ and $\epsilon$ are the constant and residual terms respectively.

\subsection{Equilibrium predictions} \label{predictions}

Table \ref{table:hybridtable} presents our main regression analysis to test the hypotheses in section \ref{hyp}. Hypothesis 1 (H1) states that participants are less likely to delete links in RPE networks compared to non-RPE ones. Specification (1) regresses the decision to delete links on the ``Network is RPE''. The coefficient is significantly negative ($p<0.001$) validating the hypothesis that a subject is less likely to delete a link in an RPE network than in a non-RPE network. Thus, RPE networks are more likely to be sustained than non-RPE networks, as predicted by the theory.

Columns (2) and (3) replicate the analysis by splitting the RPE variable into SQ and LCC networks respectively. Hypotheses 2 and 3, however, focus on a comparison between different types of RPE networks. In order to address these hypotheses directly, we also add specifications (4)-(7) to restrict the data to decisions on RPE networks only. Moreover, as Figure \ref{allnetworks} shows, a confounding factor in the analysis is that the 1R3 network is both SQ and LCC. In order to isolate its effect, columns (5) and (7) introduce a dummy variable to isolate the effect of 1R3 compared to other SQ or LCC networks. We discuss each hypothesis in turn.

For H2, column (2) in Table \ref{table:hybridtable} splits RPE networks into ``RPE \& non-SQ'' and ``SQ''. The coefficient is significantly negative for both these variables ($p<0.001$) indicating that subjects are less likely to delete links in both types of RPE networks, which is again consistent with H1. Once we look at the size of the coefficients, however, it is clear that the SQ one is not larger than RPE \& non-SQ so these results do not validate the statement in H2 that subjects are less likely to delete links in an SQ network compared to an RPE network that is not SQ.

Column (4) directly shows that subjects are not less likely to delete links in SQ compared to other types of RPE networks. Moreover, once we isolate the effect of 1R3 in column (5), subjects are actually significantly \emph{more} likely to delete links in SQ networks compared to other types of RPE networks ($p<0.001$). The data, therefore, clearly does not support Hypothesis 2 that social quilts is a refinement that identifies a set of networks that sustain cooperation.

\begin{sidewaystable}
\centering
\begin{threeparttable}
\caption{Logit regressions of “Delete” at the decision-level}
\centering
\arrayrulecolor{black}
\label{table:hybridtable}
\begin{tabular}{lccccccc} 
\hline
Sample:               & \multicolumn{3}{c}{All
  networks in Phase 1 and Phase 2} & \multicolumn{4}{c}{~All RPE networks in Phase 1 and Phase 2}                       \\
Baseline:             & \multicolumn{3}{c}{Non-RPE
  networks}                    & \multicolumn{2}{c}{\{2R4,
  1R4, 1R5\}} & \multicolumn{2}{c}{\{2R4,
  2R3, 3R3\}}  \\
\multicolumn{1}{c}{~} & (1)       & (2)       & (3)                               & (4)      & (5)                          & (6)       & (7)                          \\ 
\hline
RPE                   & -0.904*** & ~         & ~                                 & ~        & ~                            & ~         & ~                            \\
~                     & (0.144)   & ~         & ~                                 & ~        & ~                            & ~         & ~                            \\
SQ                    & ~         & -0.787*** & ~                                 & 0.391    & ~                            & ~         & ~                            \\
~                     & ~         & (0.173)   & ~                                 & (0.242)  & ~                            & ~         & ~                            \\
RPE \& non-SQ           & ~         & -1.120*** &                                   & ~        & ~                            & ~         & ~                            \\
~                     & ~         & (0.201)   &                                   & ~        & ~                            & ~         & ~                            \\
LCC                    & ~         & ~         & -1.206***                         & ~        & ~                            & -1.039*** & ~                            \\
~                     & ~         & ~         & (0.159)                           & ~        & ~                            & (0.259)   & ~                            \\
RPE \& non-LCC           & ~         & ~         & -0.131                            & ~        & ~                            & ~         & ~                            \\
~                     & ~         & ~         & (0.249)                           & ~        & ~                            & ~         & ~                            \\
1R3                   & ~         & ~         &                                   & ~        & 0.005                      & ~         & -0.993***                    \\
~                     & ~         & ~         & ~                                 & ~        & (0.257)                      & ~         & (0.269)                      \\
SQ \& not 1R3           & ~         & ~         &                                   & ~        & 1.134***                     & ~         &   ~                  \\
~                     & ~         & ~         & ~                                 & ~        & (0.330)                      & ~         &    ~                   \\
LCC \& not 1R3           & ~         & ~         &                                   & ~        & ~                            & ~         & -1.103***                            \\
~                     & ~         & ~         &     ~                           & ~        & ~                            & ~         & (0.321)                            \\
$\beta_{0}$                 & 0.578**   & 0.554**   & 0.605**                           & -1.145** & -1.045*                      & 0.032    & 0.007                      \\
~                     & (0.243)   & (0.245)   & (0.246)                           & (0.556)  & (0.558)                      & (0.514)   & (0.520)                      \\
~                     & ~         & ~         & ~                                 & ~        & ~                            & ~         & ~                            \\
Observations          & 1,501     & 1,501     & 1,501                             & 367      & 367                          & 367       & 367                          \\
Other Controls        & Yes       & Yes       & Yes                               & Yes      & Yes                          & Yes       & Yes                          \\
Pseudo R2             & 0.0329    & 0.0341    & 0.0444                            & 0.0228   & 0.0524                       & 0.0528    & 0.0532                       \\
Likelihood            & -872.1    & -871.0    & -861.8                            & -245.9   & -238.4                       & -238.3    & -238.2                       \\
\hline
\end{tabular}
\begin{tablenotes}
\footnotesize
\item Note: Logit regressions of “Delete” using decision-level data. Each observation is a decision made by a subject, where “Delete” = 1 if the subject’s decision is to delete a link. Robust standard errors are clustered at the subject-level.
\end{tablenotes}
\arrayrulecolor{black}
\end{threeparttable}
\end{sidewaystable}

For H3, column (3) splits RPE networks into ``LCC'' and ``RPE \& non-LCC''. Interestingly, the coefficient of ``RPE \& non-LCC'' networks is now insignificant, which shows that subjects are not less likely to delete links in RPE compared to non-RPE networks once we exclude LCC networks from the RPE set. In contrast, the LCC coefficient is significantly negative ($p<0.001$) validating the statement in H3 that subjects are less likely to delete links in LCC compared to other types of non-RPE networks.

Column (6) directly shows that subjects are less likely to delete links in LCC compared to other types of RPE networks ($p<0.001$). Moreover, the effect survives even if we separate the impact of 1R3 networks. Both the ``1R3'' and ``LCC \& not 1R3'' variables are significant in specification (7) ($p<0.001$) indicating that cooperation is higher in LCC networks compared to other types of RPE networks even after we isolate the impact of the only LCC network that is also SQ in our experiment. The refinement of the set of RPE networks based on cognitive complexity has behavioral validity -- it identifies a set of network where cooperation is higher because it is easier for participants to understand the consequences of failing to cooperate.

\subsection{Social preferences and learning} \label{social}

The complexity of understanding the consequences of failing to cooperate may mean that subjects need time to learn about equilibrium play. A benefit of the Computer Task between Phases 1 and 2 is to facilitate this learning process by exposing participants to decisions on the same networks they play against other participants, and knowing that the decisions made by the computer players are optimal. In this section we analyse the participants' decisions for Phases 1 and 2 separately in order to investigate the importance of learning in determining equilibrium play, and whether the role of learning varies across types of equilibria and/or networks structures.

Table \ref{table:R2NL} conducts a similar analysis to Table \ref{table:hybridtable} stratifying the data into Phases 1 and 2. The columns labelled with odd numbers contain the analysis using Phase 1 data only, while the columns labelled with even numbers contain the analysis using Phase 2 data only.

Looking just at Phase 1 data, columns (1), (3), and (5) show that cooperation is not more likely in either RPE or SQ networks compared to non-RPE networks. Similarly, (7) shows that cooperation is only marginally more likely in LCC compared to non-RPE networks ($p=0.09$), and (9) shows that the significance is driven by marginally higher cooperation in 1R3 rather than other types of LCC networks ($p=0.07$). In other words, network structure is poorly correlated with cooperation in Phase 1. A significant predictor of cooperative play is, instead, social preferences. In all odd columns with Phase 1 data only, self-interested individuals who allocated all points of the Dictator Game to themselves are more likely to delete links ($p<0.05$). When participants are first exposed to the game, they tend to make decisions driven by their social preferences rather than equilibrium considerations.


\begin{sidewaystable}
\captionsetup{font=large}
\resizebox{\textwidth}{!}{%
\centering
\begin{threeparttable}
\caption{Logit regressions of “Delete” at the decision-level}
\centering
\label{table:R2NL}
\arrayrulecolor{black}
\begin{tabular}{lcccccccccc}
\hline
Sample:                     & \multicolumn{10}{c}{All networks in either Phase 1 or Phase
  2}                                                  \\
Baseline:                   & \multicolumn{10}{c}{Non-RPE networks}                                                                             \\
~                           & Phase 1  & Phase 2   & Phase 1  & Phase 2   & Phase 1  & Phase 2   & Phase 1  & Phase 2   & Phase 1  & Phase 2    \\ 
~                           & (1)  & (2)   & (3)  & (4)   & (5)  & (6)   & (7)  & (8)   & (9)  & (10)    \\ 
\hline
~                           & ~        & ~         & ~        & ~         & ~        & ~         & ~        & ~         & ~        & ~          \\
RPE            & -0.135  & -1.285*** & ~       & ~         & ~       & ~         & ~       & ~         & ~       & ~          \\
~                           & (0.232) & (0.164)   & ~       & ~         & ~       & ~         & ~       & ~         & ~       & ~          \\
SQ             & ~       & ~         & -0.194  & -1.103*** & ~       & ~         & ~       & ~         & ~       & ~          \\
~                           & ~       & ~         & (0.276) & (0.197)   & ~       & ~         & ~       & ~         & ~       & ~          \\
RPE \& non-SQ    & ~       & ~         & -0.003  & -1.608*** & -0.001  & -1.607*** & ~       & ~         & ~       & ~          \\
~                           & ~       & ~         & (0.376) & (0.233)   & (0.376) & (0.233)   & ~       & ~         & ~       & ~          \\
LCC             & ~       & ~         & ~       & ~         & ~       & ~         & -0.434* & -1.612*** & ~       & ~          \\
~                           & ~       & ~         & ~       & ~         & ~       & ~         & (0.256) & (0.182)   & ~       & ~          \\
RPE \& non-LCC    & ~       & ~         & ~       & ~         & ~       & ~         & 0.688   & -0.474*   & 0.689   & -0.475*    \\
~                           & ~       & ~         & ~       & ~         & ~       & ~         & (0.469) & (0.281)   & (0.469) & (0.281)    \\
1R3            & ~       & ~         & ~       & ~         & -0.568* & -1.534*** & ~       & ~         & -0.568* & -1.535***  \\
~                           & ~       & ~         & ~       & ~         & (0.317) & (0.231)   & ~       & ~         & (0.317) & (0.231)    \\
SQ \& not 1R3      & ~       & ~         & ~       & ~         & 0.507   & -0.234    & ~       & ~         & ~       & ~          \\
~                           & ~       & ~         & ~       & ~         & (0.480) & (0.314)   & ~       & ~         & ~       & ~          \\
LCC \& not 1R3    & ~       & ~         & ~       & ~         & ~       & ~         & ~       & ~         & -0.221  & -1.721***  \\
~                           & ~       & ~         & ~       & ~         & ~       & ~         & ~       & ~         & (0.391) & (0.252)    \\
Subject
  is purely selfish & 0.453** & -0.008    & 0.449** & 0.005     & 0.471** & -0.033    & 0.475** & -0.038    & 0.473** & -0.032     \\
~                           & (0.190) & (0.185)   & (0.190) & (0.185)   & (0.188) & (0.188)   & (0.188) & (0.187)   & (0.188) & (0.187)    \\
$\beta_{0}$ & 0.524 & 0.661** & 0.535 & 0.622** & 0.541 & 0.683** & 0.527 & 0.710*** & 0.542 & 0.696***\\
~                           & (0.350) & (0.305) & (0.352) & (0.307) & (0.354) & (0.314) & (0.352) & (0.308) & (0.353) & (0.310)    \\
~                           & ~       & ~         & ~       & ~         & ~       & ~         & ~       & ~         & ~       & ~          \\
Observations                & 599     & 902       & 599     & 902       & 599     & 902       & 599     & 902       & 599     & 902        \\
Other
  Controls            & Yes     & Yes       & Yes     & Yes       & Yes     & Yes       & Yes     & Yes       & Yes     & Yes        \\
Pseudo
  R2                 & 0.0098  & 0.0657    & 0.0101  & 0.0689    & 0.0162  & 0.0810    & 0.0179  & 0.0796    & 0.0187  & 0.0799     \\
Likelihood                  & -344.2  & -516.7    & -344.1  & -514.9    & -342.0  & -508.2    & -341.4  & -509.0    & -341.2  & -508.8     \\
\hline
\end{tabular}
\begin{tablenotes}
\footnotesize
\item Note: Logit regressions of “Delete” using decision-level data. Each observation is a decision made by a subject, where “Delete” = 1 if the subject’s decision is to delete a link. Robust standard errors are clustered at the subject-level.
\end{tablenotes}
\arrayrulecolor{black}
\end{threeparttable}
}
\end{sidewaystable}


In Phase 2, after participants have had the opportunity to learn from the Computer Task, however, equilibrium considerations become a significant predictor of decisions to cooperate. Specification (2) shows that, as stated in H1, participants are less likely to delete links in RPE compared to non-RPE networks ($p<0.001$). If we limit our attention to SQ networks, column (4) shows that participants are less likely to delete links in SQ compared to non-RPE networks ($p<0.001$). However, column (6) shows that this effect is driven by the 1R3 network that is both SQ and LCC -- participants are not more likely to cooperate when assigned to other types of SQ networks compared to non-RPE networks ($p=0.44$), and therefore we find little evidence to support H2.

Given that the motivation to formulate the concept of LCC is based on the low burden of computation to understand these equilibria, we would expect equilibrium play to emerge in LCC networks after the Computer Task facilitates learning. Column (8) confirms that in Phase 2 cooperation is higher in LCC networks compared to non-RPE networks ($p<0.001$). Moreover, column (10) shows that cooperation in LCC networks is not just driven by cooperation in the 1R3 network -- the LCC coefficient is significant even after we exclude 1R3 networks ($p<0.001$). 

Experimental data suggests that LCC is a better behavioral predictor of cooperative play than SQ. The analysis in Table \ref{table:R3NL} delves deeper into the comparison of cooperation in SQ and LCC networks. First, we exclude data from the 2R4 network, which is the only RPE network that is neither SQ nor LCC. Second, specifications (1) and (2) investigate cooperation levels in LCC and SQ compared to non-RPE networks, and they separate the 1R3 network that is both SQ and LCC from either category. As we have seen above, network structure is not significant in Phase 1, but it becomes a predictor of cooperation in Phase 2. In particular, column (2) shows cooperation is significantly higher in both LCC ($p<0.001$) and 1R3 ($p<0.001$) compared to non-RPE networks, but there is no difference between SQ and non-RPE networks ($p=0.45$). A Wald test confirms that there is more cooperation in both LCC ($p<0.001$) and 1R3 ($p<0.001$) compared to SQ networks. 

Finally, specifications (3) and (4) limit the data to RPE networks and take 1R3 as the baseline comparison network. Column (4) with Phase 2 data only shows cooperation in LCC networks other than 1R3 is indistinguishable from cooperation in 1R3, while cooperation in 1R3 is higher than SQ networks other than 1R3 ($p<0.001$). Again, a Wald test confirms that there is more cooperation in LCC compared to SQ networks ($p<0.001$).


\begin{table}
\centering
\begin{threeparttable}
\caption{Logit regressions of “Delete” at the decision-level}
\centering
\label{table:R3NL}
\arrayrulecolor{black}
\begin{tabular}{lcccc} 
\hline
Sample:                                        & \multicolumn{2}{c}{All networks excl. 2R4} & \multicolumn{2}{c}{RPE networks excl. 2R4}  \\
Baseline:                                      & \multicolumn{2}{c}{Non-RPE networks}       & \multicolumn{2}{c}{1R3 networks}                  \\
                                               & Phase 1 & Phase 2                          & Phase 1 & Phase 2                                 \\ 
                                                                                              & (1) & (2)                          & (3) & (4)                                 \\ 
\arrayrulecolor{black}\cline{1-1}\arrayrulecolor{black}\cline{2-5}
\textit{Logit regressions on “Delete”} & ~       & ~                                & ~       & ~                                       \\
\textit{~}                                     & ~       & ~                                & ~       & ~                                       \\
\textit{$\beta_{LCC}$}: LCC \& not 1R3               & -0.220  & -1.723***                        & 0.410   & -0.319                                  \\
~                                                          & (0.391) & (0.252)                          & (0.486) & (0.321)                                 \\
\textit{$\beta_{SQ}$}: SQ \& not 1R3               & 0.506   & -0.238                           & 1.096** & 1.133***                                \\
~                                                          & (0.484) & (0.314)                          & (0.528) & (0.348)                                 \\
\textit{$\beta_{1R3}$}: 1R3                     & 0.569*  & -1.538***                        & ~       & ~                                       \\
~                                                          & (0.316) & (0.232)                          & ~       & ~                                       \\
~                                                          & ~       & ~                                & ~       & ~                                       \\
Observations                                               & 593     & 885                              & 114     & 230                                     \\
Other
  Controls                                           & Yes     & Yes                              & Yes     & Yes                                     \\
Pesudo
  R2                                                & 0.0161  & 0.0820                           & 0.0332  & 0.0790                                  \\
Likelihood                                                 & -340.2  & -495.9                           & -68.0   & -146.8                                  \\ 
\hline
\textit{Wald tests}                                        & ~       & ~                                & ~       & ~                                       \\
\textit{~}                                                 & ~       & ~                                & ~       & ~                                       \\
$\beta_{LCC}-\beta_{SQ}$                                    & -0.726  & -1.485***                        & -0.686  & -1.452***                               \\
~                                                          & (0.605) & (0.393)                          & (0.616) & (0.402)                                 \\
$\beta_{LCC}-\beta_{1R3}$                                   & 0.349   & -0.185                           & ~       & ~                                       \\
~                                                          & (0.482) & (0.319)                          & ~       & ~                                       \\
$\beta_{SQ}-\beta_{1R3}$                                   & 1.074** & 1.300***                         & ~       & ~                                       \\
~                                                          & (0.531) & (0.356)                          & ~       & ~                                       \\
\hline
\end{tabular}
\begin{tablenotes}
\footnotesize
\item Note: Logit regressions of “Delete” using decision-level data. Each observation is a decision made by a subject, where “Delete” = 1 if the subject’s decision is to delete a link. Robust standard errors are clustered at the subject-level. 2R4 networks were removed from the regressions to standardize the baselines. Wald tests for differences in coefficients are also provided in the bottom panel.
\end{tablenotes}
\arrayrulecolor{black}
\end{threeparttable}
\end{table}

The Computer Task also allows us to disentangle social preferences and/or beliefs about other participants' cognitive ability as a driver of decisions because these two factors are not relevant when playing against computer players. Table \ref{table:F} compares participants' play in Phases 1 and 2 with the Computer Task, conditioning on the same networks seen by the same participant. In particular, we compare the difference in the ratio of optimal decisions made by participants in each pair of tasks (e.g. Phase 1 optimality ratio versus Computer Task optimality ratio), stratifying the decisions by ``All networks'', ``RPE networks only'', ``LCC networks only'', and ``SQ networks only''. Here, we define an ``optimal'' decision as a decision that a computer player would have made (i.e., delete a link if the network is not RPE, not delete a link if the network is RPE). The ``Delete Ratio'' of a subject in a task represents the ratio of number of delete decisions to total number of decisions of that subject in that task.

Participants play less optimally in Phase 1 compared to the Computer Task. The negative sign in column (1) shows that participants' decisions in Phase 1 are significantly less optimal on all networks, while the positive signs in (2)-(4) show participants remove more links in RPE networks -- a decision that is not optimal -- in Phase 1 compared to the Computer Task. Once we compare play in Phase 2 and the Computer Task, however, none of the differences are significant. While qualitatively it still holds that play tends to be more optimal in the Computer Task, the lack of significance suggests that social preferences and/or beliefs about other participants' limited cognitive ability are second-order considerations in determining cooperative play.

\clearpage
\begin{table}
\resizebox{\textwidth}{!}{%
\centering
\begin{threeparttable}
\caption{Tests for differences in means between Phase 1 or 2 vs Computer Task delete and optimality ratios at the subject-task-level.}
\label{table:F}
\begin{tabular}{lcccc} 
\hline
Sample:                                      & \multicolumn{1}{l}{All Networks}     & \multicolumn{1}{l}{RPE networks} & \multicolumn{1}{l}{SQ Networks} & \multicolumn{1}{l}{LCC Networks}  \\
Ratio:                                       & \multicolumn{1}{l}{Optimality Ratio} & \multicolumn{1}{l}{Delete Ratio}      & \multicolumn{1}{l}{Delete Ratio}     & \multicolumn{1}{l}{Delete Ratio}      \\ 
~                                            & (1)                                 & (2)                                  & (3)                                 & (4)                                  \\
\hline
\textit{Wilcoxon Signed-rank tests} & \multicolumn{1}{l}{~}                & \multicolumn{1}{l}{~}                 & \multicolumn{1}{l}{~}                & \multicolumn{1}{l}{~}                 \\
~                                            & \multicolumn{1}{l}{~}                & \multicolumn{1}{l}{~}                 & \multicolumn{1}{l}{~}                & \multicolumn{1}{l}{~}                 \\
Phase 1 vs Computer Task                           & -3.538***                            & 3.704***                              & 2.865***                             & 3.715***                              \\
~                                            & [82]                                 & [78]                                  & [47]                                 & [76]                                  \\
Phase 2 vs Computer Task                         & -1.006                               & 1.346                                 & 1.395                                & 1.313                                 \\
~                                            & [171]                                & [161]                                 & [98]                                & [160]                                  \\
~                                            & \multicolumn{1}{l}{~}                & \multicolumn{1}{l}{~}                 & \multicolumn{1}{l}{~}                & \multicolumn{1}{l}{~}                 \\
\hline
\end{tabular}
\begin{tablenotes}
\small
\item Note: The square brackets denote the number of observations. Each observation is a subject in a task (i.e., each subject contributes to some number $x\in\{0,1,2,3\}$ of observations corresponding to Phase 1/Phase 2). We test the hypothesis that the mean difference of the optimality ratios is less than zero, and the hypothesis that the mean difference of the delete ratios is more than zero. We report the z-score for the Wilcoxon signed-rank tests.
\end{tablenotes}
\arrayrulecolor{black}
\end{threeparttable}}
\end{table}

\section{Conclusion} \label{conclusion}

This paper has shown both theoretically and experimentally the importance of bounded rationality in mediating how social network structure determines cooperative behavior. In the theory part, we defined a novel notion of \emph{Low Cognitive Complexity} (LCC) to select among cooperative (RPE) equilibria in the \cite{jackson:2012} (JRT) framework. Our two main contributions come from the experiment. First, we validate the core prediction of the JRT model that RPE equilibria are more cooperative that non-equilibrium network structures. Second, we show that cooperation is higher in LCC compared to other types of equilibrium (RPE) networks.

Our results point to the importance of further research at the intersection of bounded rationality and networks. The intrinsic complexity of network structure means that equilibrium predictions based on full rationality and complete information about the network are unlikely to be descriptive of actual behavior. Iterative thinking is an appealing way to deal with this complexity in the network context because, loosely speaking, individuals tend to think about the consequences of their actions up to some steps away from their network position, and the ``some'' can be easily related to the steps of iterative reasoning the individual is able to perform. This paper shows that bounded rationality based on iterative reasoning helps to select among a multiplicity of equilibria to predict how network structure affects cooperative behavior. We believe this approach may help in resolving the problem of equilibrium multiplicity that is present in many other contexts where networks affect behavior.

Our results provide further evidence that the ability to remove links affects behavior. Given that most theoretical contributions take the network as exogenous and unchangeable, previous experimental research incorporating network formation tended to be atheoretical with a lot of variation in experimental protocols in terms of how links can be added/removed by participants. In our paper, instead, we generate precise theoretical predictions by adopting a reduced form of the network formation process where only link removal is allowed in order to match the JRT theoretical framework. A fruitful avenue for further research is applying this reduced form approach to other games on networks to investigate theoretically and experimentally how the inclusion of the network formation process matters in determining how networks affect our behavior.


\newpage
\bibliography{refs}

\newpage
\section*{Appendix A} \label{appendixa}

In this section, we prove the novel theoretical results in the paper and discuss their validity beyond the $2b>c>b$ parametrization adopted in the paper.

Let $m$ be a positive integer such that $mb > c > (m-1)b$. We focus on the case where $m \geq 2$ since $m = 1$ is a trivial case where all networks are RPE. Note that when $m=2$ this is the parametrization in our paper. Definition \ref{defn:tc} can then be generalised as:

\begin{defn}
\label{defn:tc-general}
The class of \textbf{TC networks} are generated as follows: For $k=0$, define $TC_0(m)=\{g^{\emptyset}\}$. Inductively on $k$, $TC_k(m)$ is such that $g \in TC_k(m)$ if and only if $\forall i,ij \subset g, \exists g' \subset g-ij $ such that 
\begin{itemize}
 \item $g' \in TC_{k'}(m)$ for some $k'\leq k-m$; 
 \item $d_i (g') \leq d_i (g)-m$
 \item $ \nexists g'' \in TC_{k''} $ s.t. $ g'' \subset g-ij $ and $ D(g'')>D(g') $.
\end{itemize}
\end{defn}

Intuitively, this is the same as Definition \ref{defn:tc} except that one would need to lose $m$ links as a result of deleting a link, in order to discourage the agent from deleting the link. JRT show that Proposition \ref{prop:RPE} and \ref{prop:quilt} extend to this set-up with an appropriate generalization of the definition of social quilt.

Below we prove Proposition \ref{prop:lowCC} for the $m=2$ case and discuss its extension to the general case when $m>2$. We start by proving three lemmas that apply to the general $m\geq 2$ case.

\begin{lem}
\label{lem:lowCCproof-1}
All nodes in RPE networks either have no links or have $m$ or more links.
\end{lem}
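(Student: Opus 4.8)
The plan is to invoke Proposition \ref{prop:RPE} to replace ``RPE network'' with ``TC network'', and then read off the claim directly from the second bullet of Definition \ref{defn:tc-general} using only the fact that degrees are non-negative. Concretely, I would let $g$ be an RPE network, so by Proposition \ref{prop:RPE} we have $g \in TC_k(m)$, where $k$ is the number of links in $g$. The dichotomy ``degree $0$ or degree $\geq m$'' should then fall out of a single inequality, with the $k=0$ base case handled separately.

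First I would dispose of the trivial case $k=0$: here $g = g^{\emptyset}$, every node has degree $0$, and the claim holds. For $k \geq 1$, I would fix an arbitrary node $i$ that has at least one link, say $ij \in g$, and apply the defining condition of $TC_k(m)$ to this $i$ and this incident link $ij$. The condition supplies a subnetwork $g' \subset g - ij$ with $d_i(g') \leq d_i(g) - m$. Since $d_i(g') \geq 0$, chaining these gives $0 \leq d_i(g) - m$, i.e. $d_i(g) \geq m$. As $i$ was an arbitrary node possessing a link, every node of $g$ either has no link or has at least $m$ links, which is the statement of the lemma.

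I do not anticipate a genuine obstacle, since the proof is essentially a one-line consequence of the degree-drop requirement together with non-negativity of degrees. The only point that warrants care is the logical structure of Definition \ref{defn:tc-general}: one must observe that its condition is universally quantified over all nodes $i$ and all incident links $ij$, so it may be legitimately invoked for \emph{any} node that has a link, and that it is the mere \emph{existence} of the witness $g'$ with $d_i(g') \leq d_i(g) - m$ — not any finer property of a specific continuation — that forces $d_i(g) \geq m$.
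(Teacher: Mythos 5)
Your proof is correct, but it takes a genuinely different route from the paper's. The paper argues directly from incentives: if a node has only $l \leq m-1$ links, the most it can ever lose by refusing a favour is $lb \leq (m-1)b < c$, so refusing is strictly profitable and the network cannot be sustained as an RPE --- a one-step contradiction that never invokes the TC machinery. You instead pass through Proposition \ref{prop:RPE} to identify RPE with TC networks and then extract the degree dichotomy from the second bullet of Definition \ref{defn:tc-general} via $0 \leq d_i(g') \leq d_i(g) - m$. Both arguments are sound. Yours has the virtue of being purely combinatorial and mechanical once the equivalence is granted, and it sits naturally alongside the rest of Appendix A, which works with the TC definition anyway; the cost is that you lean on the full weight of the JRT equivalence theorem (in its general-$m$ form) to prove something the paper gets from the bare economics of the favour-exchange game. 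One small point in your favour: your version makes explicit the quantifier structure of the TC condition and why mere existence of the witness $g'$ suffices, which the paper's informal payoff argument leaves implicit.
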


\begin{proof}
\noindent
Suppose that a node in a RPE network only has $l$ links, such that $l \leq (m-1)$. Given that $c>(m-1)b \geq lb$, the agent can get a higher payoff by refusing a favour, since the maximum possible value of retaining all links is $lb$. This network therefore cannot be RPE, and we have reached a contradiction.
\end{proof}

\begin{lem}
\label{lem:lowCCproof-2}
The number of links in RPE networks are more than or equal to the number of nodes with links.
\end{lem}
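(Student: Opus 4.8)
The plan is to reduce this to a simple degree-counting argument that combines the handshake identity with Lemma \ref{lem:lowCCproof-1}. The key observation is that Lemma \ref{lem:lowCCproof-1} gives a uniform lower bound of $m$ on the degree of every node that has any link at all, and since we are in the regime $m \geq 2$, this bound is at least $2$ — exactly the ratio of (twice the) edges to vertices that a pure edge-counting inequality needs.

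First I would fix notation: let $g$ be an RPE network, let $V' = \{ i : d_i(g) \geq 1 \}$ be the set of nodes with links, and write $n' = |V'|$ for the number of such nodes and $L$ for the total number of links. The goal is to show $L \geq n'$.

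Next I would invoke the handshake lemma, $\sum_{i} d_i(g) = 2L$, and note that isolated nodes contribute nothing to the sum, so $\sum_{i \in V'} d_i(g) = 2L$. By Lemma \ref{lem:lowCCproof-1}, every $i \in V'$ satisfies $d_i(g) \geq m$, hence $2L = \sum_{i \in V'} d_i(g) \geq m\,n'$. Since $m \geq 2$ in the parametrization under consideration, this yields $2L \geq m\,n' \geq 2 n'$, and dividing by $2$ gives $L \geq n'$, which is the claim.

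I do not expect a genuine obstacle here, as the argument is a one-line consequence of Lemma \ref{lem:lowCCproof-1} once the handshake identity is written down; the only point requiring a moment of care is to remember that the degree sum should be taken over $V'$ (or equivalently over all of $N$, since isolated nodes add zero) and to use the hypothesis $m \geq 2$ explicitly when passing from $m n'$ to $2 n'$.
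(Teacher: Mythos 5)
Your proposal is correct and matches the paper's own proof essentially step for step: restrict attention to the nodes with links, apply the handshake identity, invoke Lemma \ref{lem:lowCCproof-1} to bound each such degree below by $m$, and conclude $2L \geq m n' \geq 2n'$ using $m \geq 2$. No differences worth noting.
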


\begin{proof}
\noindent
We limit the graph only to the nodes with links, $g_e \subseteq g$. Let the number of nodes in $g_e$ be $N_e$ and the number links be $|E(g_e )|$.

From Proposition \ref{lem:lowCCproof-1}, $d_i (g_e ) \geq m$ $\forall$ $i \in g_e$. This implies that $\sum_{i \in g_e} d_i (g_e ) \geq m \times N_e$. 

Given that, by definition, $2|E(g_e )|=\sum_{i \in g_e} d_i (g_e )$, this implies that $2|E(g_e )| \geq m \times N_e$ and therefore $|E(g_e )|\geq N_e$ since $m \geq 2$.
\end{proof}

\begin{lem}
\label{lem:lowCCproof-3}
All non-empty RPE networks contain a cycle.
\end{lem}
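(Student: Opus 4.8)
The plan is to combine Lemma \ref{lem:lowCCproof-2} with the elementary fact from graph theory that any acyclic graph (a forest) on $N$ vertices has at most $N-1$ edges, with equality precisely when it is a tree. First I would restrict attention to the subgraph $g_e \subseteq g$ induced by the nodes that actually have links; since $g$ is a non-empty RPE network, $g_e$ is nonempty, and I let $N_e$ denote its number of nodes and $|E(g_e)|$ its number of links, exactly as in the proof of Lemma \ref{lem:lowCCproof-2}. The crucial input is already supplied by that lemma, namely $|E(g_e)| \geq N_e$.

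Next I would argue by contradiction. Suppose $g_e$ contained no simple cycle. Then $g_e$ is a forest, and a forest on $N_e$ vertices with $c \geq 1$ connected components has exactly $N_e - c$ links, hence at most $N_e - 1$ links. This gives $|E(g_e)| \leq N_e - 1 < N_e$, directly contradicting $|E(g_e)| \geq N_e$ from Lemma \ref{lem:lowCCproof-2}. Therefore $g_e$ must contain a simple cycle, and since every link and node of $g_e$ is a link and node of $g$, this simple cycle lies in $g$ as well, establishing the claim.

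The argument is short because the arithmetic heavy lifting has already been done in Lemma \ref{lem:lowCCproof-2}; the only genuinely new ingredient is the forest edge-count bound. The one point that warrants care — and which I would make explicit rather than gloss over — is that the bound $N_e - 1$ applies to \emph{disconnected} forests as well, so no connectivity assumption on $g_e$ is needed. A secondary point is purely terminological: the conclusion ``contains a cycle'' should be read in the paper's sense of containing a simple cycle (Definition \ref{defn:tc}'s ambient vocabulary), and one should note that the forest characterization I am using is precisely ``acyclic $=$ no simple cycle,'' so the contradiction I derive yields a simple cycle immediately with no further reduction required. I do not expect any real obstacle here; the main task is simply to state the forest bound cleanly and cite Lemma \ref{lem:lowCCproof-2} correctly.
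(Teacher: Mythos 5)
Your proof is correct and follows essentially the same route as the paper's: both restrict to the subgraph $g_e$ of nodes with links, invoke the bound $|E(g_e)| \geq N_e$ from Lemma \ref{lem:lowCCproof-2}, and conclude via the standard edge-count characterization of acyclic graphs. Your version is slightly more careful in explicitly using the forest (rather than tree) bound so that no connectivity assumption is needed, but this is a presentational refinement rather than a different argument.
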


\begin{proof}
\noindent
A graph with $n$ vertices is a tree if and only if it is connected and has $n-1$ links. This means that a graph with $n$ vertices and at least $n$ links contains a cycle. 

From Lemma \ref{lem:lowCCproof-2}, this implies that all non-empty RPE networks, of which $g_e$ is a subset, contain a cycle. 
\end{proof}

The above lemmas 
apply for the general case where $m \geq 2$. However, the following lemma is only valid if $m = 2$.

\begin{lem}
\label{lem:lowCCproof-4}
For $m=2$, a simple cycle is a RPE network.
\end{lem}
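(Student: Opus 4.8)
The plan is to verify Definition~\ref{defn:tc} directly and then invoke Proposition~\ref{prop:RPE} to conclude that the cycle is RPE. Fix a simple cycle $g$ on $n$ nodes. Since the smallest simple cycle is a triangle, $n \geq 3$; every node has degree exactly $2$, and $g$ has exactly $n$ links, so the goal is to establish $g \in TC_n$.

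First I would take an arbitrary node $i$ together with an incident link $ij \subset g$. Deleting $ij$ yields $g - ij$, which is a simple path on the same $n$ vertices and hence acyclic. I would then propose $g' = g^{\emptyset}$ as the witness subnetwork and check the three clauses of Definition~\ref{defn:tc}. The first holds because $g^{\emptyset} \in TC_0$ and $0 \leq n - 2$ (as $n \geq 3$); the second holds because $d_i(g^{\emptyset}) = 0 = d_i(g) - 2$. Since a cycle is edge- and vertex-transitive, the same witness works for every choice of $i$ and $ij$, so both clauses hold uniformly.

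The crux is the third clause, the maximality requirement that no TC network $g'' \subset g - ij$ has $D(g'') > D(g^{\emptyset})$. Here the plan is to argue by contradiction: such a $g''$ would have a node of positive degree and hence be non-empty, so by Lemma~\ref{lem:lowCCproof-3} it would contain a cycle. But $g'' \subset g - ij$ and $g - ij$ is a path, making $g''$ a forest with no cycle --- a contradiction. Thus $g^{\emptyset}$ is the largest TC subnetwork of $g - ij$, the third clause holds, and $g \in TC_n$; by Proposition~\ref{prop:RPE}, $g$ is RPE.

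I expect the third clause to be the only real obstacle, and the argument hinges entirely on Lemma~\ref{lem:lowCCproof-3}, which rules out any intermediate non-empty equilibrium inside the broken cycle and forces the collapse to $g^{\emptyset}$. It is also worth flagging where $m = 2$ is essential: the second clause requires $d_i(g') \leq d_i(g) - m = 2 - m$, which a node can satisfy (with degree $0$) exactly when $m = 2$; for $m \geq 3$ a degree-$2$ cycle node cannot shed enough links, consistent with the lemma being stated only for $m = 2$.
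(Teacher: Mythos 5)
Your proof is correct, and it takes a genuinely more formal route than the paper's. The paper argues directly about the cascade of deletions: one agent cuts a link, her neighbour is left with a single link whose retention is not worth the cost of a favour, so she cuts too, and the unraveling propagates around the ring until the network is empty, leaving every agent down two links and hence unwilling to refuse a favour in the first place. You instead verify Definition~\ref{defn:tc} clause by clause with the empty network as witness and then invoke Proposition~\ref{prop:RPE} to convert membership in $TC_n$ into the RPE property. The substantive content is the same --- the continuation after any single deletion is the empty network --- but you establish it statically via Lemma~\ref{lem:lowCCproof-3}: any non-empty TC subnetwork of $g-ij$ would have to contain a cycle, yet $g-ij$ is a path, so $g^{\emptyset}$ is the unique maximal TC subnetwork and the third (maximality) clause holds. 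The paper's version is shorter and conveys the economic intuition; yours is tighter against the formal definitions, makes explicit the role of the maximality clause that the paper's cascade argument handles only implicitly, and correctly isolates where $m=2$ enters (a degree-$2$ node can satisfy $d_i(g')\leq d_i(g)-m$ only when $m=2$). The one thing worth flagging is that your appeal to Lemma~\ref{lem:lowCCproof-3} for the TC network $g''$ implicitly passes through Proposition~\ref{prop:RPE} a second time, since that lemma is stated for RPE networks; this is legitimate but should be said.
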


\begin{proof}
\noindent
For $m=2$, in a simple cycle, any agent who deletes a link will cause the agent with the deleted link to only have one link remaining. This agent will thus delete the remaining link he has, and this process of deleting continues until we reach an empty network. All agents would have lost two links in the resulting equilibrium, and therefore no agent has an incentive to refuse a favour, making this an RPE network.
\end{proof}

Note that Lemma \ref{lem:lowCCproof-4} does not apply if $m>2$, since $d_i(g) = 2$ where $i$ are nodes in the simple cycle $g$, and this would contradict Lemma \ref{lem:lowCCproof-1}.

We now prove the following proposition stated in the paper for $m=2$.

\noindent
\textbf{Proposition \ref{prop:lowCC}.} \textit{For m=2, a network is LCC if and only if it is a simple cycle.}

\begin{proof}
\noindent
$(\Leftarrow)$ In Lemma \ref{lem:lowCCproof-4}, we have shown that any deletion of a link in a simple cycle will cause a cascade of deletion that results in the empty network. Simple cycles are therefore LCC.

$(\Rightarrow)$ We prove this by contradiction. Suppose that there exists a LCC RPE network g that is not a simple cycle. The empty network is excluded. From Lemma \ref{lem:lowCCproof-3}, there exists a simple cycle within the network. Let that simple cycle be $g'$. Since $g$ is not a simple cycle, the simple cycle $g'$ must be a strict-subset of $g$; i.e. $g' \subset g$.

By Lemma \ref{lem:lowCCproof-4}, this simple cycle is RPE; i.e. $g' \in RPE$.
By definition of LCC, the largest strict-subset RPE-network must be the empty network. However, $g' \subset g$, $g' \in RPE$, and $g' \neq \{g^{\emptyset}\}$.

Thus, we have reached a contradiction since we have found a non-empty strict-subset that is an RPE network.
\end{proof}

Since Lemma \ref{lem:lowCCproof-4} does not hold for $m>2$, it follows that Proposition \ref{prop:lowCC} does not hold for $m>2$ since simple cycles are not RPE under $m>2$ in the first instance. To our knowledge, there is no simple defining characteristic for LCC networks under $m>2$, and we would have to use the Definition \ref{defn:tc-general} to characterise it, i.e. a RPE network where the largest TC subnetwork is the empty network. Given that there are no simple defining characteristics for $m>2$, it is likely that participants with bounded rationality will find it difficult to predict equilibrium outcomes even in LCC networks for $m>2$. Hence, while the theoretical results would still apply for $m>2$, caution should be applied in assuming that LCC networks for $m>2$ would be computationally easy for participants to understand the consequences of a defection. Future work could explore experimenting with LCC networks for $m>2$ to test this hypothesis.

\newpage

\section*{Appendix B} \label{appendixb}
This section provides screenshots of the specific instructions and tasks given to participants in the experiment. In Task 1, participants' levels of altruism are measured via the Dictator Game, after which the instructions of the network game are provided. Before they proceed, participants must pass a Quiz which tests their knowledge on how they are being compensated in a random example of decisions made in a given network game. This is to ensure that participants have a basic understanding of the game.

Once participants pass the Quiz, they proceed to play the network game in Tasks 2, 3, 4, and 5. Tasks 2, 4, and 5 are the tasks in which participants play against each other, whereas participants play against computer players in Task 3. In our paper, we refer to Task 2 as Phase 1, Task 3 as the Computer Task, and Tasks 4 and 5 as Phase 2. 

Towards the end of the experiment, we measure participants' ability to perform backward induction and risk aversion using \cite{holt:2002} in Tasks 6 and 7 respectively\footnote{In Task 6, the computer and the subject take turns adding 1 or 2 points to a pot with 0 points at the beginning. The computer adds randomly every turn and makes the first move. The winner is the player that reaches 21 points. Subjects with an inclination for backward induction would realize that in order to achieve a guaranteed win, the pot needs to be at 18, 15, 12, 9, 6 and 3 points during the computer's turn. Unfortunately, we failed to record the moves of the subjects --with the exception of the end result -- due to a code error. Although winning may be determined by the ability to do backward induction, luck can play a significant role. To be conservative, we did not consider subjects' performance in Task 6 in our analysis}. Finally, the experiment concludes with a survey.

\vspace{0.2in}
\begin{figure}[!hbt]
    \centering
    \includegraphics[width=0.8\linewidth]{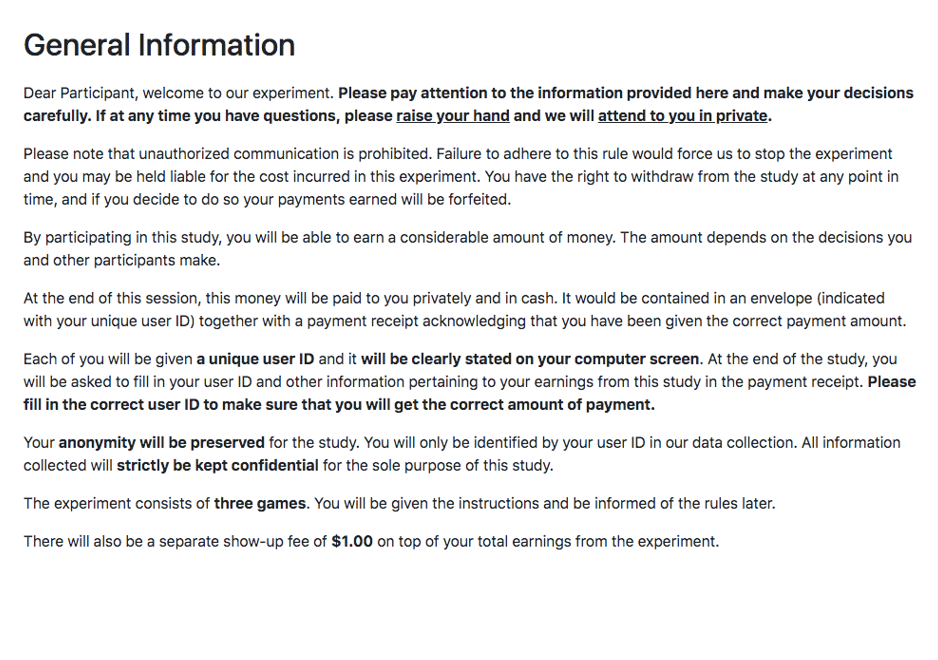}
\end{figure}
\begin{figure}[!hbt]
    \centering
    \includegraphics[width=0.8\linewidth]{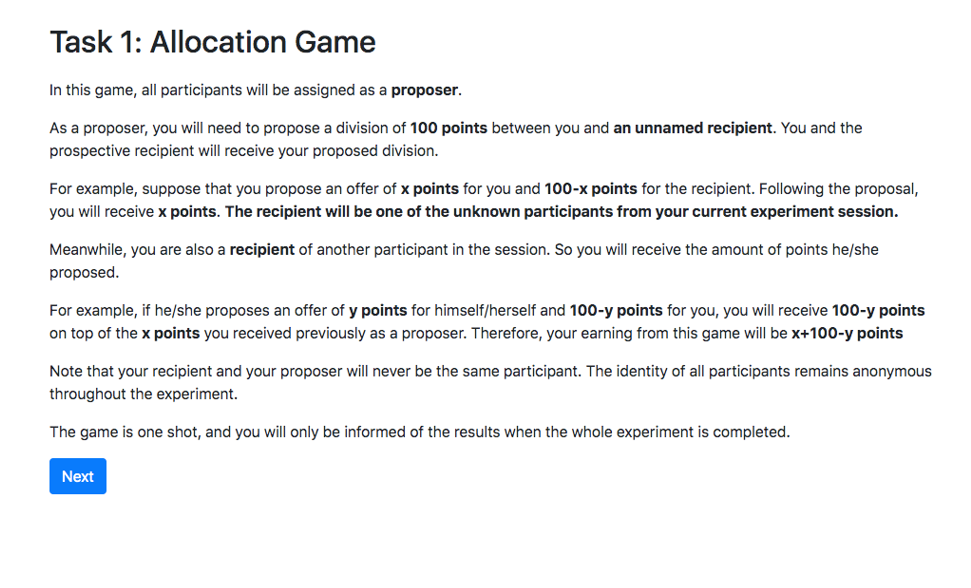}
\end{figure}
\begin{figure}[!hbt]
    \centering
    \includegraphics[width=0.8\linewidth]{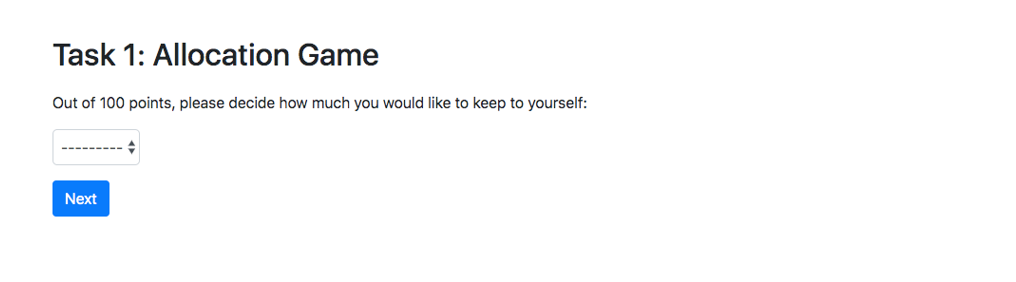}
\end{figure}
\begin{figure}[!hbt]
    \centering
    \includegraphics[width=0.8\linewidth]{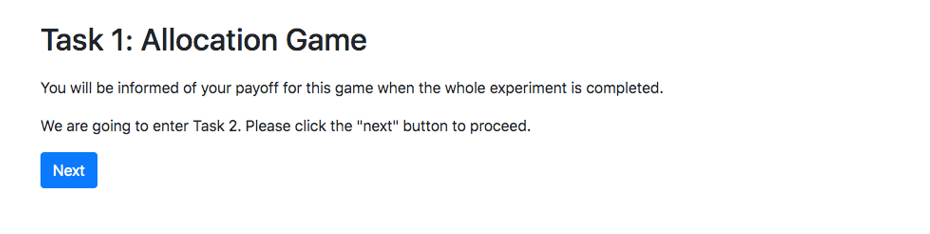}
\end{figure}
\clearpage
\begin{figure}[!hbt]
    \centering
    \includegraphics[width=0.8\linewidth]{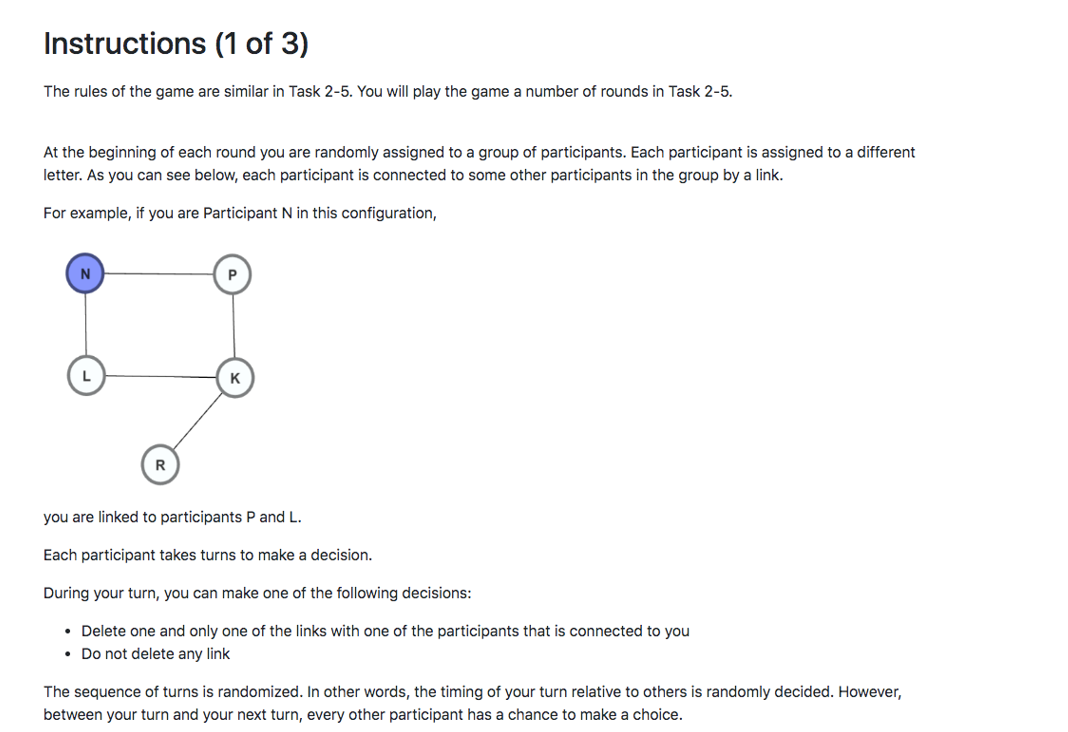}
\end{figure}
\begin{figure}[!hbt]
    \centering
    \includegraphics[width=0.8\linewidth]{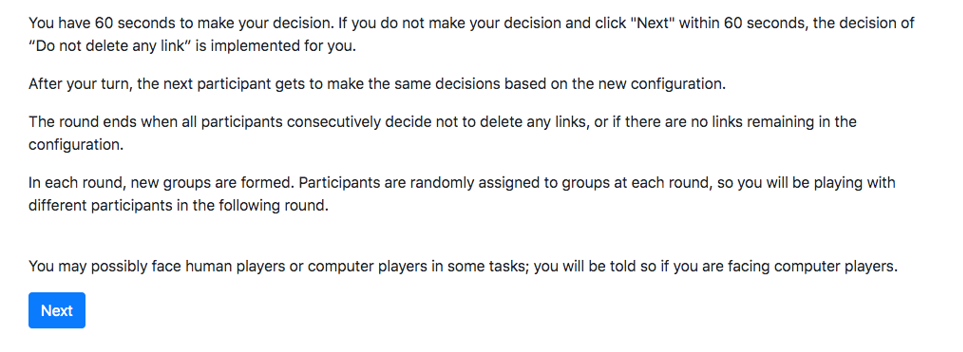}
\end{figure}
\begin{figure}[!hbt]
    \centering
    \includegraphics[width=0.8\linewidth]{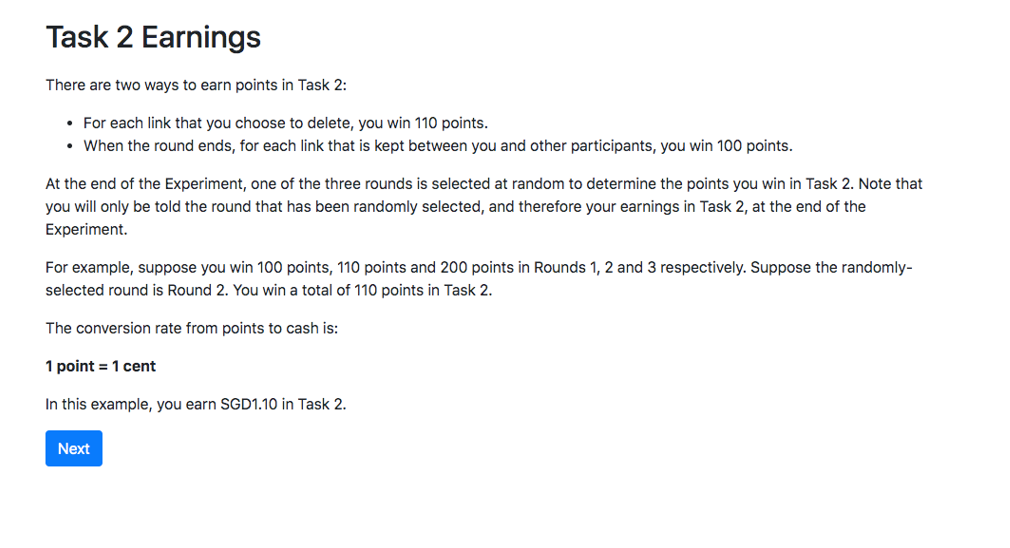}
\end{figure}
\begin{figure}[!hbt]
    \centering
    \includegraphics[width=0.8\linewidth]{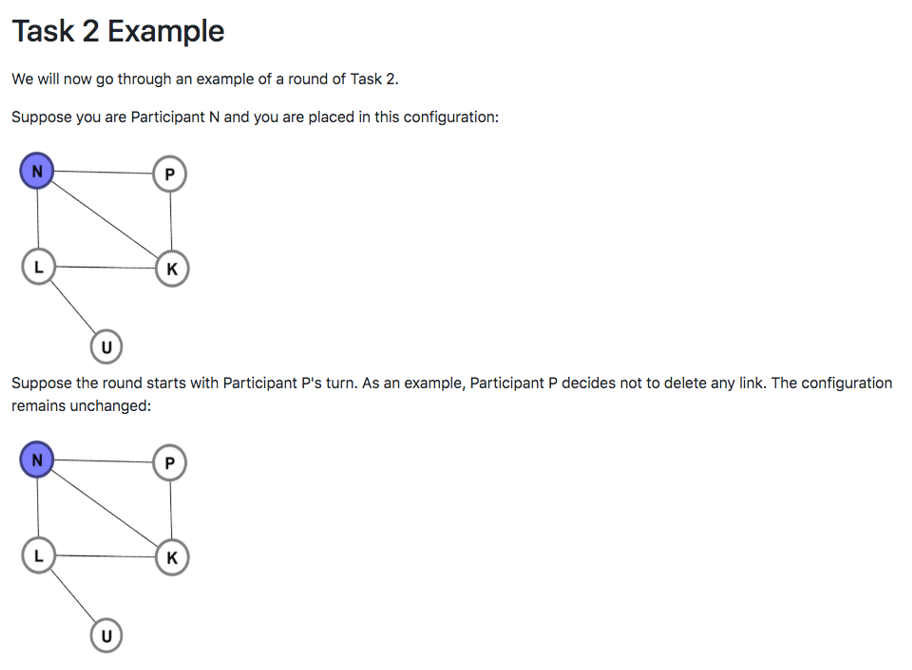}
\end{figure}
\begin{figure}[!hbt]
    \centering
    \includegraphics[width=0.8\linewidth]{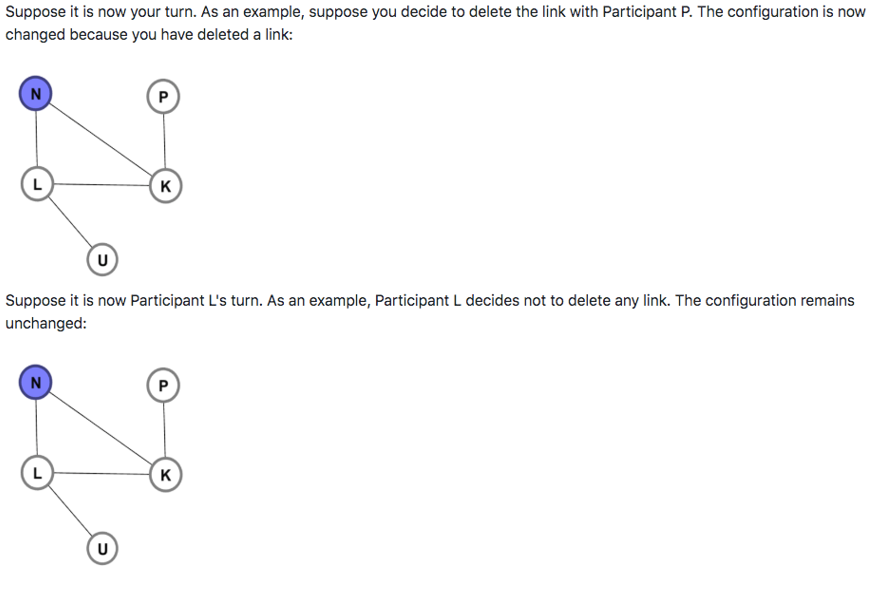}
\end{figure}
\begin{figure}[!hbt]
    \centering
    \includegraphics[width=0.8\linewidth]{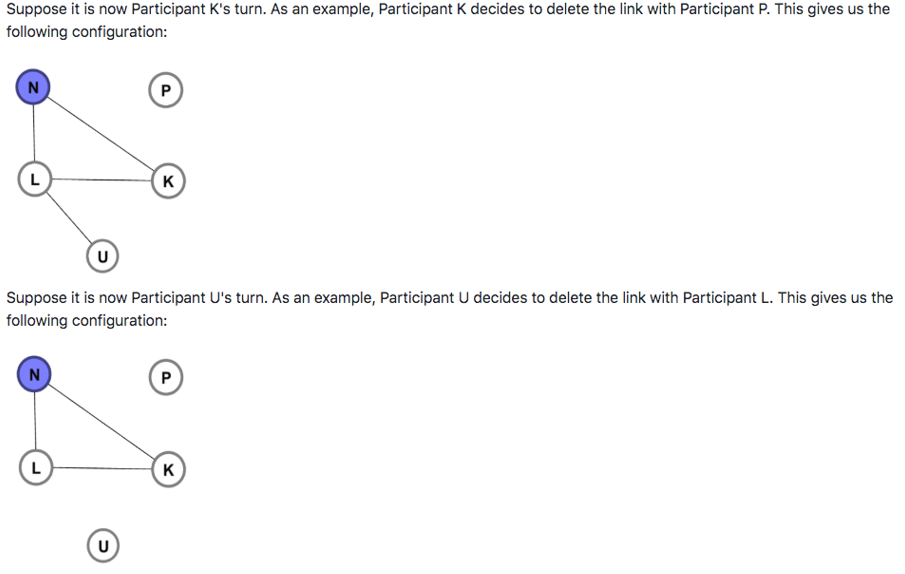}
\end{figure}
\begin{figure}[!hbt]
    \centering
    \includegraphics[width=0.8\linewidth]{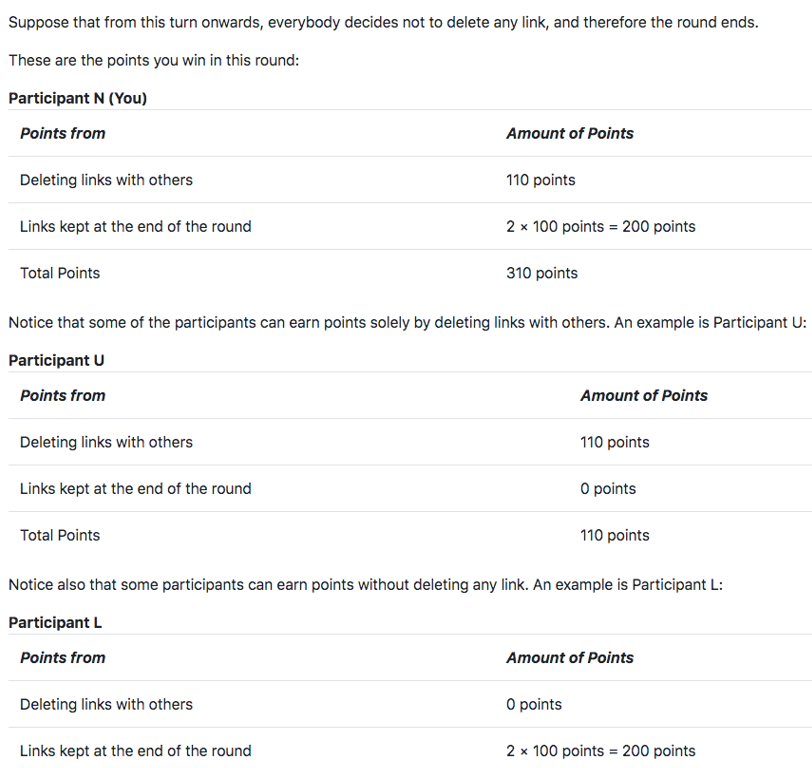}
\end{figure}
\begin{figure}[!hbt]
    \centering
    \includegraphics[width=0.8\linewidth]{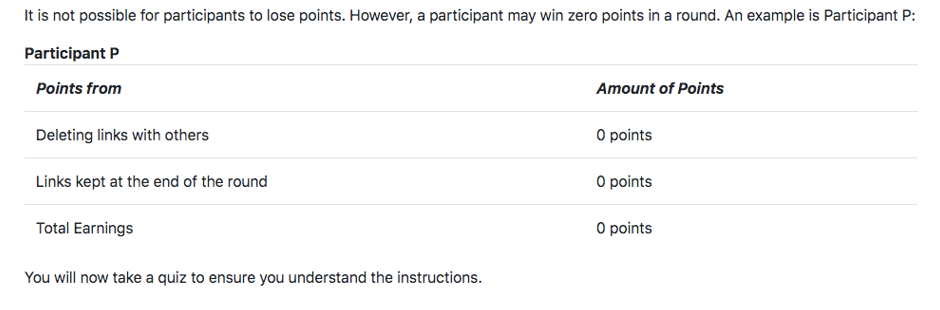}
\end{figure}
\begin{figure}[!hbt]
    \centering
    \includegraphics[width=0.8\linewidth]{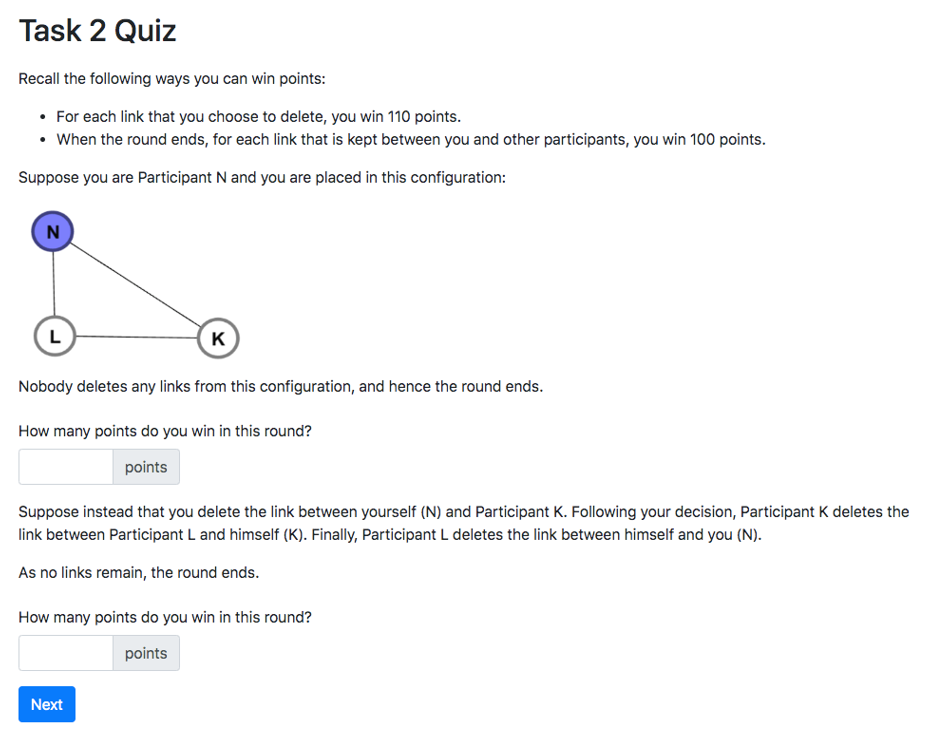}
\end{figure}
\begin{figure}[!hbt]
    \centering
    \includegraphics[width=0.8\linewidth]{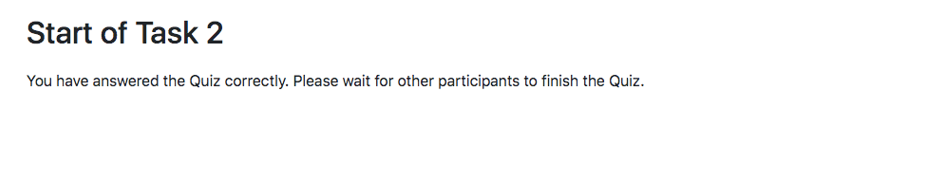}
\end{figure}
\begin{figure}[!hbt]
    \centering
    \includegraphics[width=0.8\linewidth]{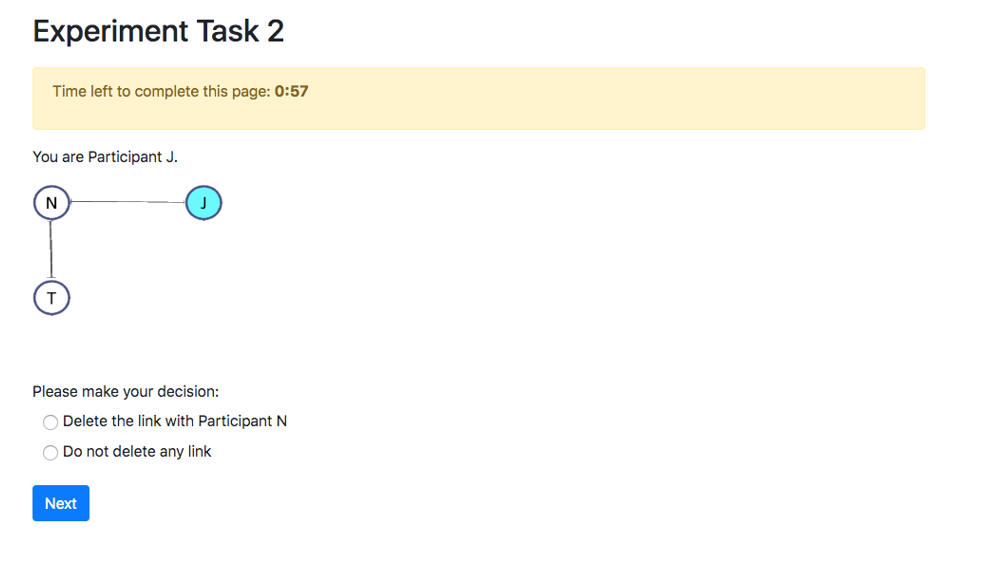}
\end{figure}
\begin{figure}[!hbt]
    \centering
    \includegraphics[width=0.8\linewidth]{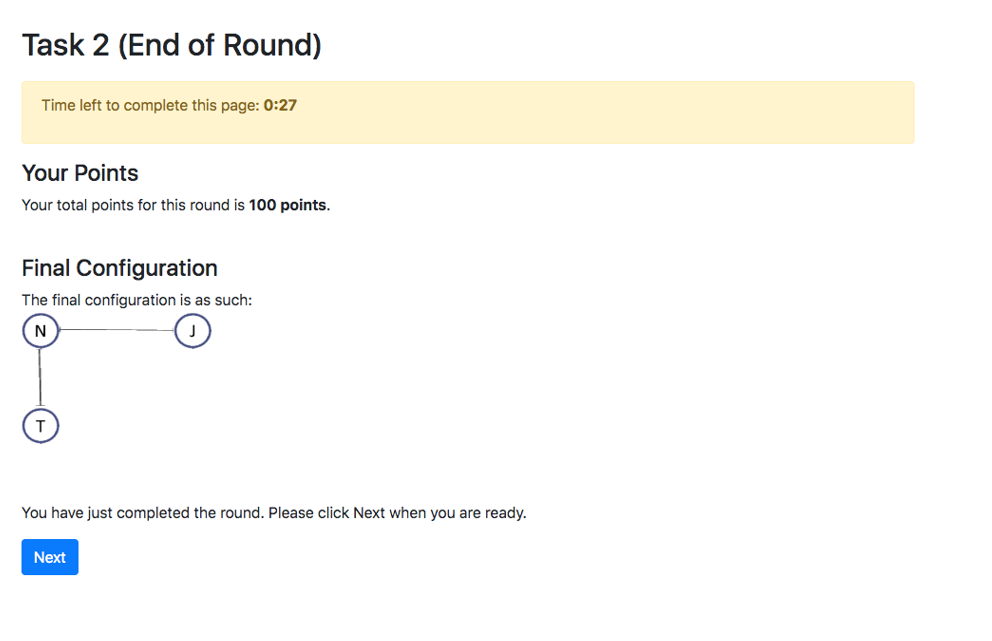}
\end{figure}
\begin{figure}[!hbt]
    \centering
    \includegraphics[width=0.8\linewidth]{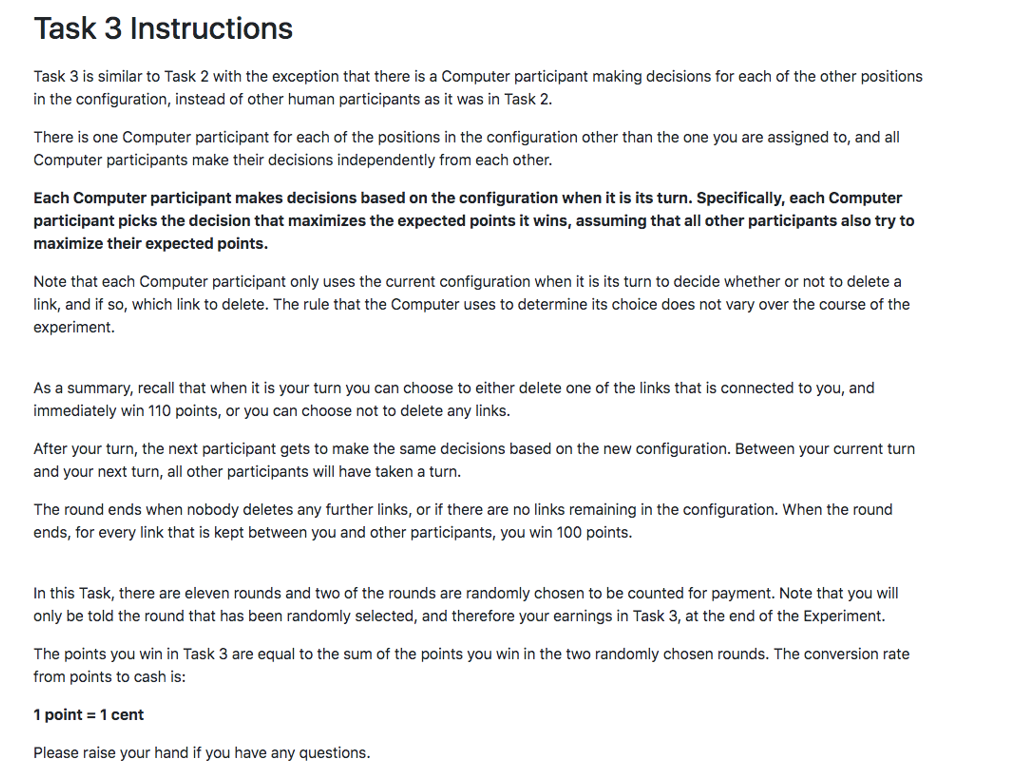}
\end{figure}
\begin{figure}[!hbt]
    \centering
    \includegraphics[width=0.8\linewidth]{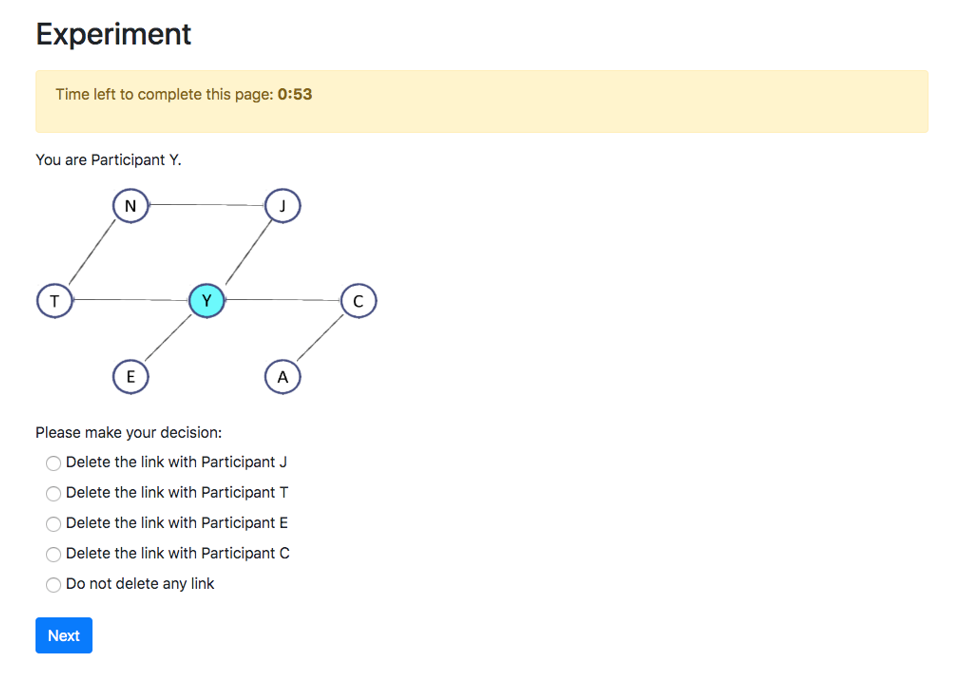}
\end{figure}
\begin{figure}[!hbt]
    \centering
    \includegraphics[width=0.8\linewidth]{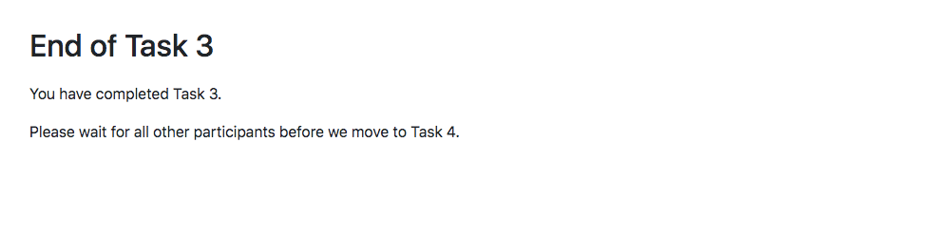}
\end{figure}
\begin{figure}[!hbt]
    \centering
    \includegraphics[width=0.8\linewidth]{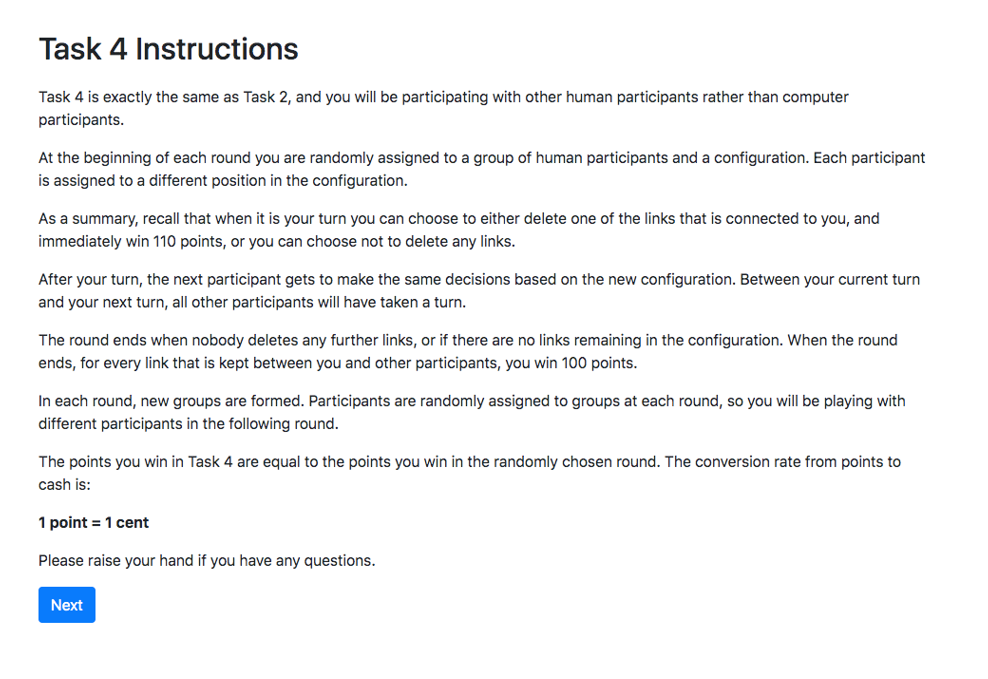}
\end{figure}
\begin{figure}[!hbt]
    \centering
    \includegraphics[width=0.8\linewidth]{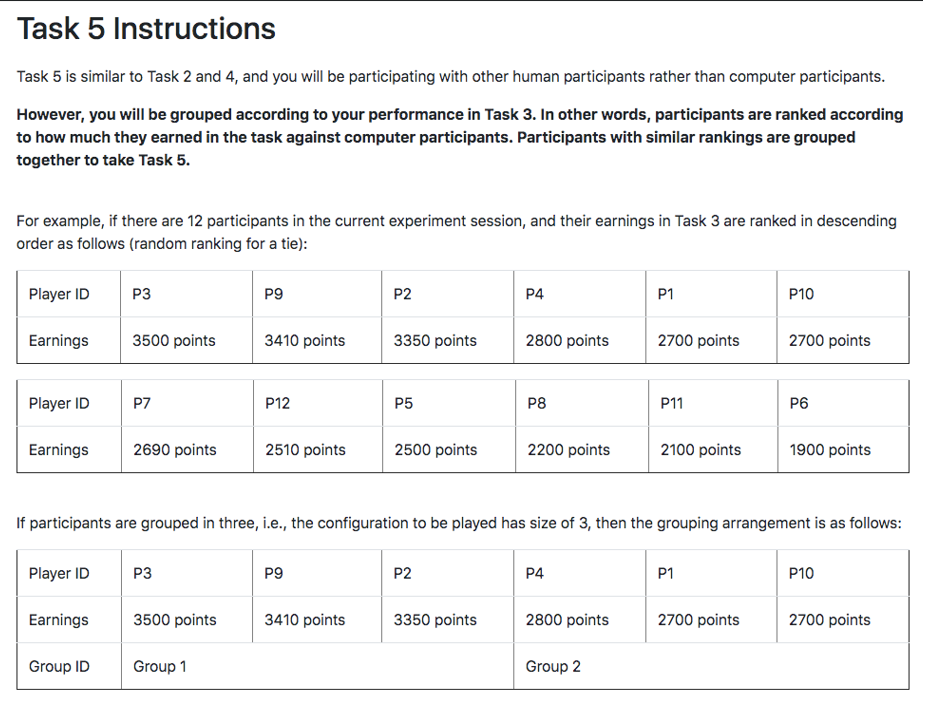}
\end{figure}
\begin{figure}[!hbt]
    \centering
    \includegraphics[width=0.8\linewidth]{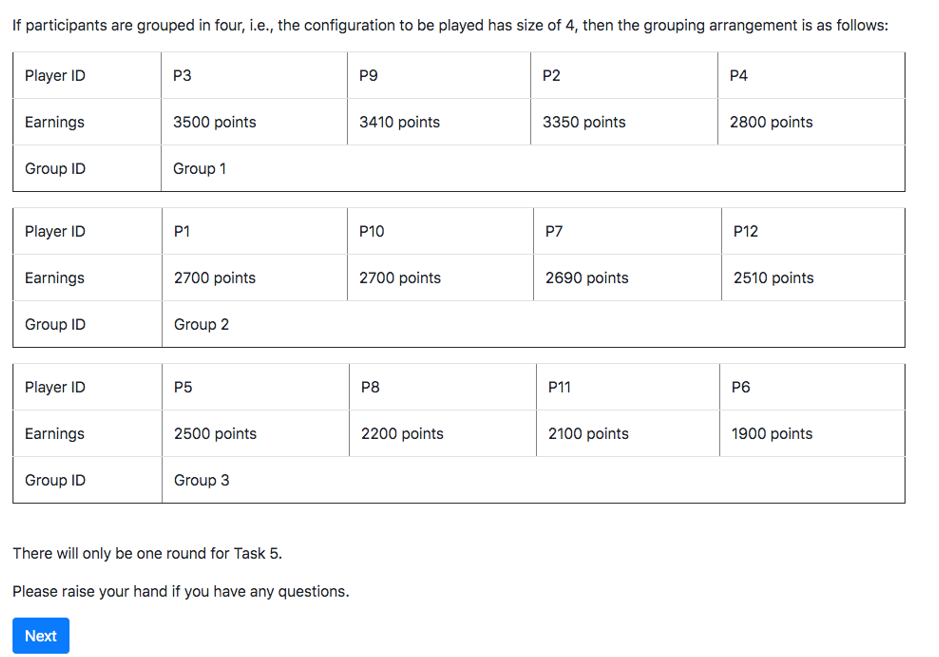}
\end{figure}
\begin{figure}[!hbt]
    \centering
    \includegraphics[width=0.8\linewidth]{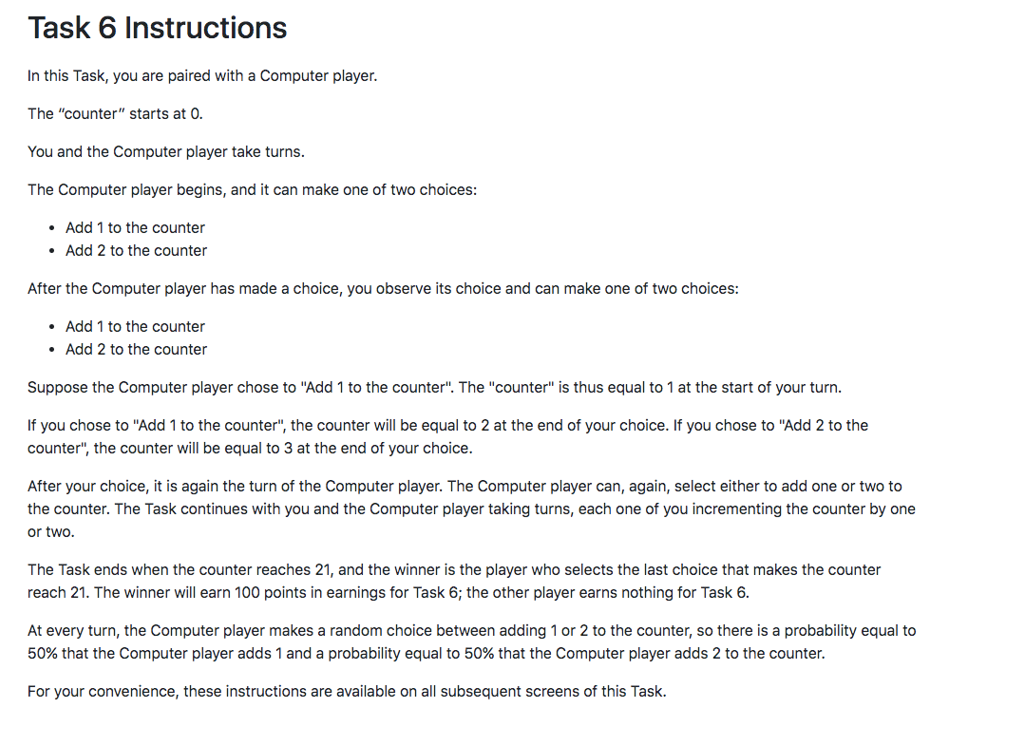}
\end{figure}
\begin{figure}[!hbt]
    \centering
    \includegraphics[width=0.8\linewidth]{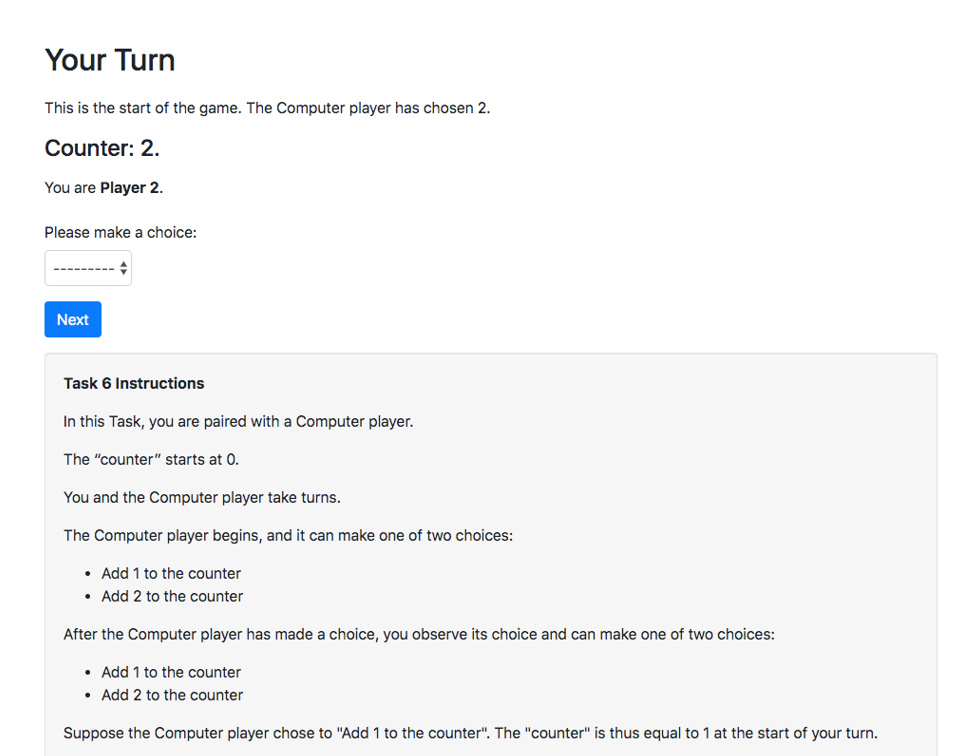}
\end{figure}
\begin{figure}[!hbt]
    \centering
    \includegraphics[width=0.8\linewidth]{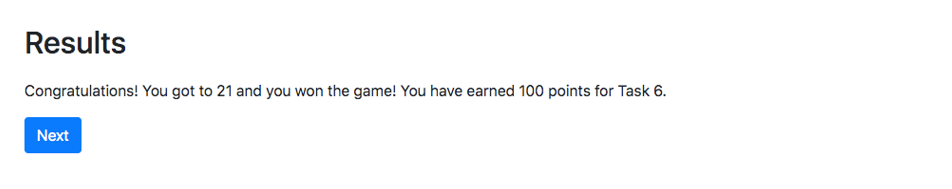}
\end{figure}
\begin{figure}[!hbt]
    \centering
    \includegraphics[width=0.8\linewidth]{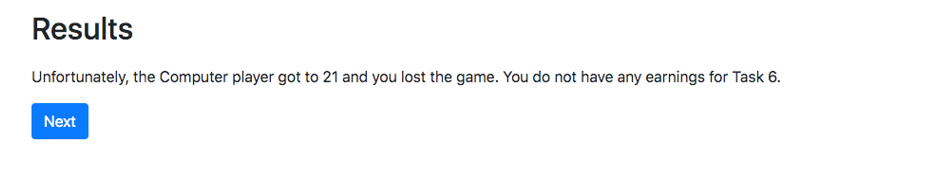}
\end{figure}
\begin{figure}[!hbt]
    \centering
    \includegraphics[width=0.8\linewidth]{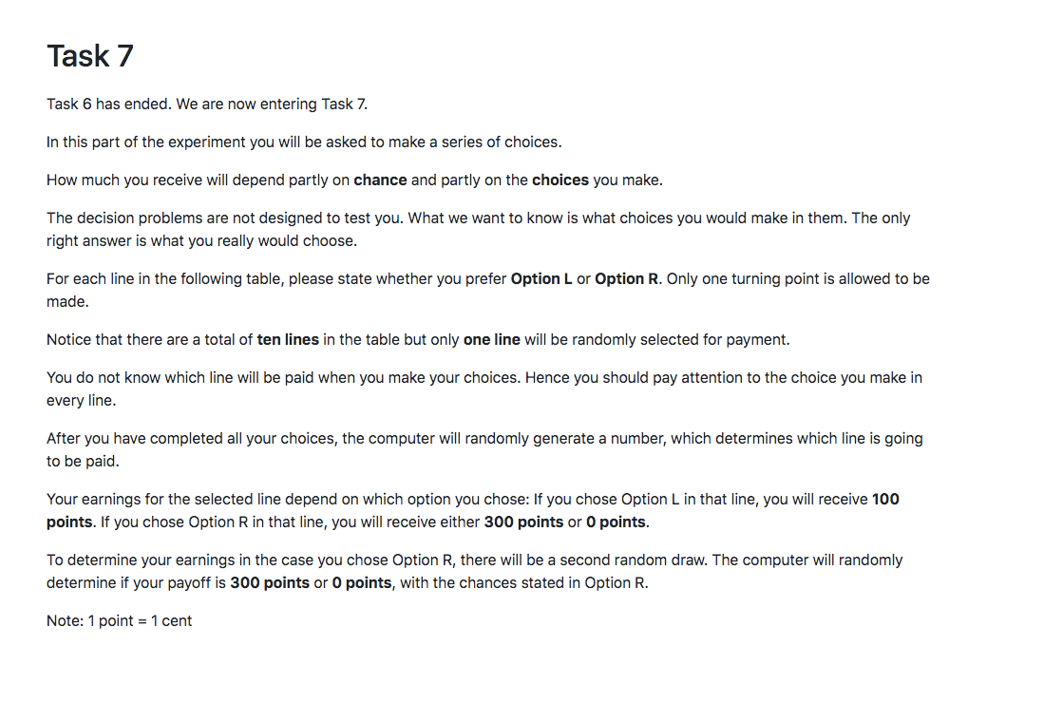}
\end{figure}
\begin{figure}[!hbt]
    \centering
    \includegraphics[width=0.8\linewidth]{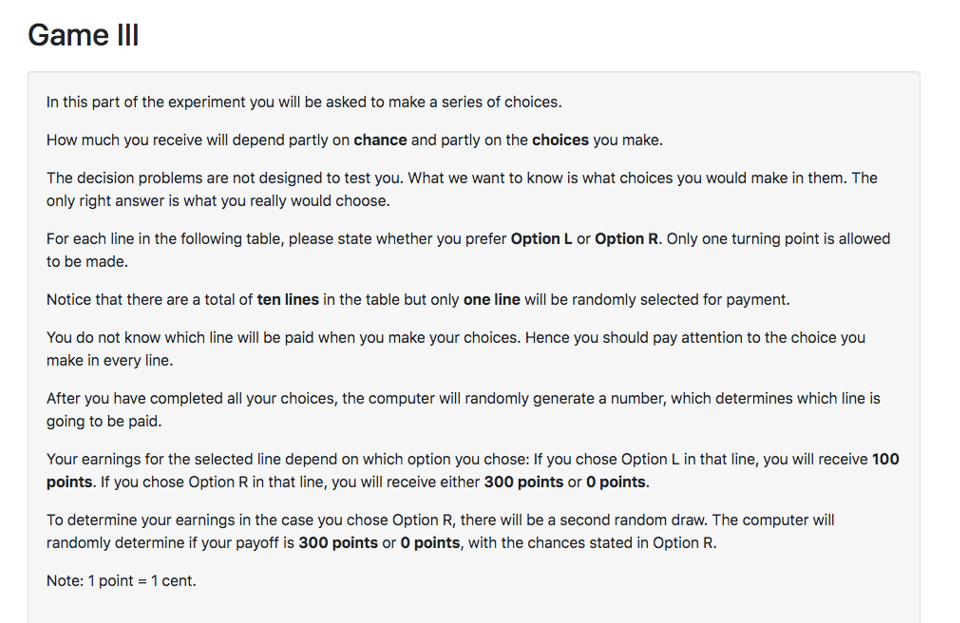}
\end{figure}
\begin{figure}[!hbt]
    \centering
    \includegraphics[width=0.8\linewidth]{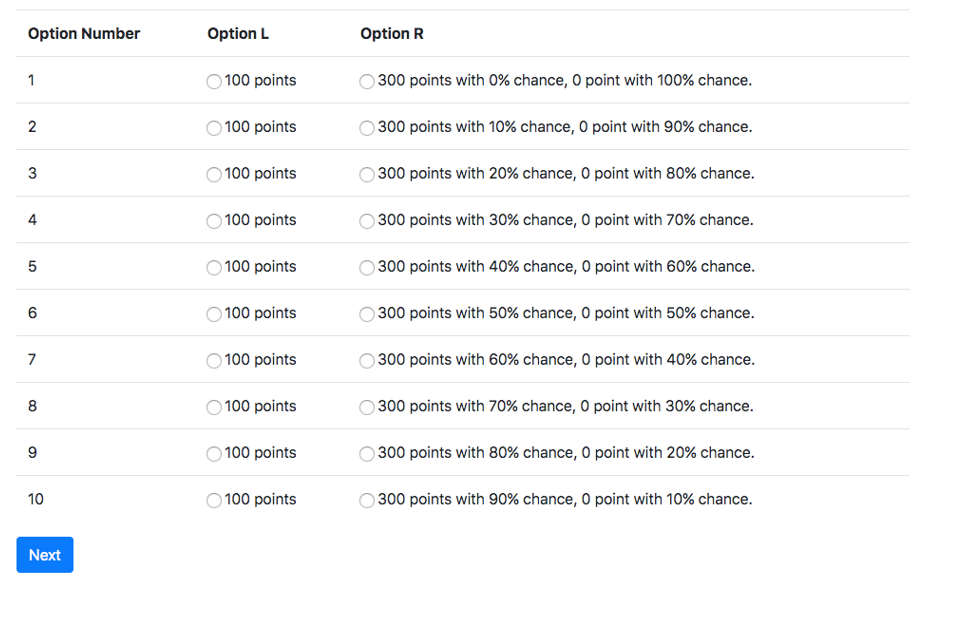}
\end{figure}
\begin{figure}[!hbt]
    \centering
    \includegraphics[width=0.8\linewidth]{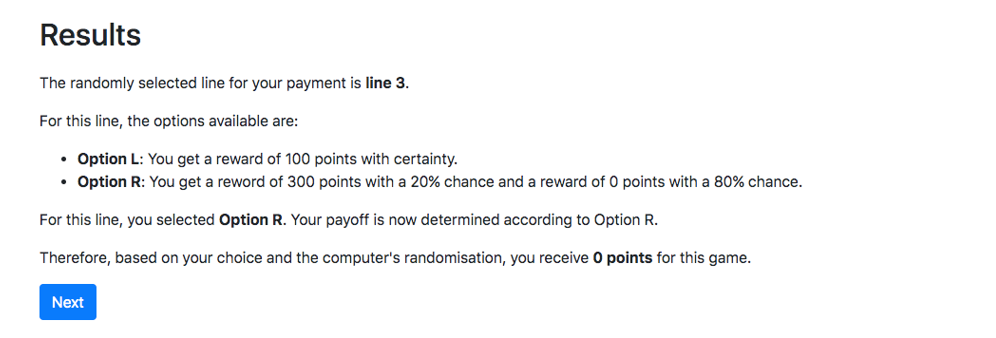}
\end{figure}
\begin{figure}[!hbt]
    \centering
    \includegraphics[width=0.8\linewidth]{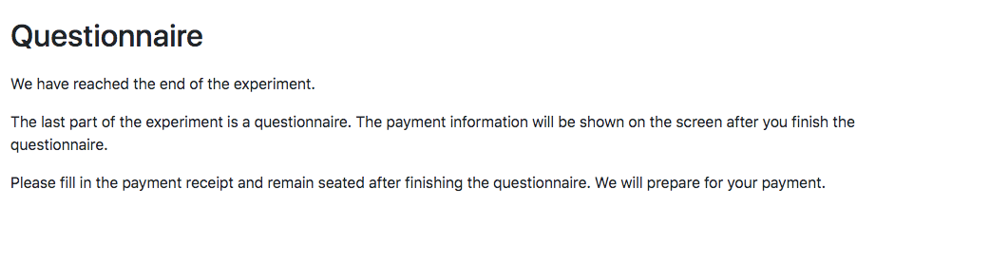}
\end{figure}
\begin{figure}[!hbt]
    \centering
    \includegraphics[width=0.8\linewidth]{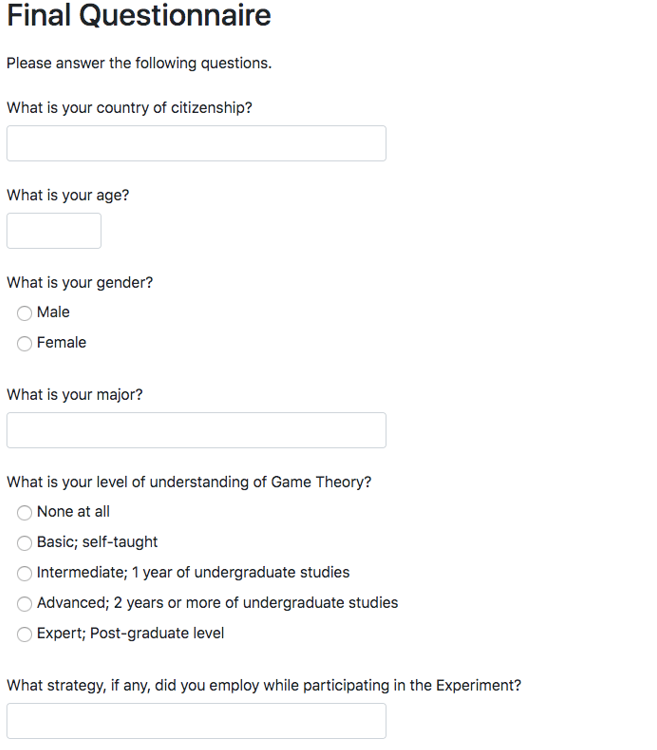}
\end{figure}
\begin{figure}[!hbt]
    \centering
    \includegraphics[width=0.8\linewidth]{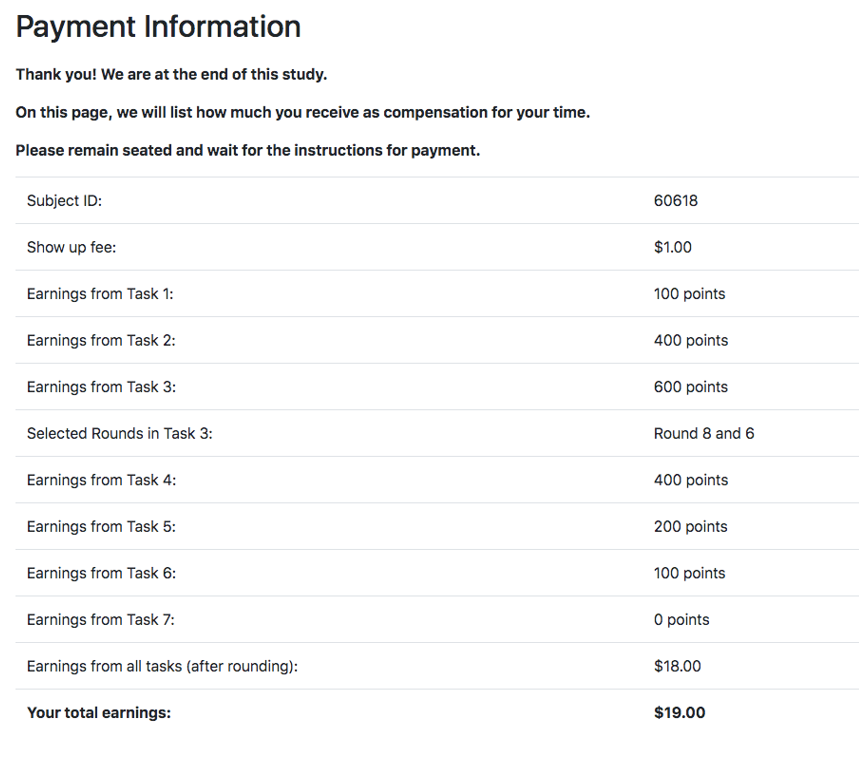}
\end{figure}





\end{document}